\def\BibTeX{{\rm B\kern-.05em{\sc i\kern-.025em b}\kern-.08em
    T\kern-.1667em\lower.7ex\hbox{E}\kern-.125emX}}
\DeclareMathOperator*{\argmax}{argmax}   % Jan Hlavacek
\DeclareMathOperator*{\argmin}{\arg\!\min}
\newtheoremstyle{boldStyle}% name of the style to be used
  {\topsep}% measure of space to leave above the theorem. E.g.: 3pt
  {\topsep}% measure of space to leave below the theorem. E.g.: 3pt
  {\itshape}% name of font to use in the body of the theorem
  {0pt}% measure of space to indent
  {\bfseries}% name of head font
  {.}% punctuation between head and body
  { }% space after theorem head; " " = normal interword space
  {\thmname{#1}\thmnumber{ #2}\thmnote{ (#3)}}
\newtheoremstyle{italicStyle}% name of the style to be used
  {\topsep}% measure of space to leave above the theorem. E.g.: 3pt
  {\topsep}% measure of space to leave below the theorem. E.g.: 3pt
  {}% name of font to use in the body of the theorem
  {0pt}% measure of space to indent
  {\bfseries}% name of head font
  {.}% punctuation between head and body
  { }% space after theorem head; " " = normal interword space
  {\thmname{#1}\thmnumber{ #2}\thmnote{ (#3)}}
\theoremstyle{boldStyle}
\newtheorem{theorem}{Theorem}
\newtheorem{lemma}{Lemma}
\theoremstyle{italicStyle}
\newtheorem{assumption}{Assumption}
\def\sq{\mathbin{{\strut\rule{1.25ex}{1.25ex}}}}
\renewenvironment{proof}{{\textbf{Proof:}}}{\hfill$\sq$}
\definecolor{greenColor}{rgb}{0.6,0.8,0.0}
\definecolor{ligthGray}{rgb}{0.95,0.95,0.95}
\newcommand{\be}{b_\mathcal{E}}
\newcommand{\Be}{\mathcal{B}_\mathcal{E}}
\newcommand{\G}{\mathcal{G}}    
\newcommand{\T}{\mathcal{T}}    
\def\mathcolor#1#{\@mathcolor{#1}}
\def\@mathcolor#1#2#3{%
  \protect\leavevmode
  \begingroup\color#1{#2}#3\endgroup
}
\title{\LARGE \bf
Time-Optimal  Navigation in Uncertain Environments \\ with High-Level Specifications}
\author{Ugo Rosolia, Mohamadreza Ahmadi, Richard M. Murray, and Aaron D. Ames  % <-this % stops a space

% Alternative title: Time-Optimal High-Level Navigation in Uncertain Environments

\thanks{The authors are with the California Institute of Technology, 1200 E. California Blvd., Pasadena, CA 91125.
e-mails: \tt\scriptsize{\{urosolia, mrahmadi, murray, ames\}@caltech.edu.}}
}%
\begin{document}
\maketitle

% \vspace{-0.4cm}

\begin{abstract}
Mixed observable Markov decision processes (MOMDPs) are a modeling framework for autonomous systems described by both fully and partially observable states.
In this work, we study the problem of synthesizing a control policy for MOMDPs that minimizes the expected time to complete the control task while satisfying syntactically co-safe Linear Temporal Logic (scLTL) specifications. First, we present an exact dynamic programming update to compute the value function. Afterwards, we propose a point-based approximation, which allows us to compute a lower bound of the closed-loop probability of satisfying the specifications. The effectiveness of the proposed approach and comparisons with standard strategies are shown on high-fidelity  navigation tasks with partially observable static obstacles. 
% Finally, we show that the proposed strategy, which leverages mixed-observability, is suitable for high-level decision making of autonomous systems operating in partially known environments.
\end{abstract}

% \begin{IEEEkeywords}
% mixed observability, decision making, partial observation, specifications, time-optimal
% \end{IEEEkeywords}

\vspace{-0.2cm}
\section{Introduction}
Autonomous systems take actions based on observations of the environment surrounding them. When the environment includes both fully observable and partially observable regions, mixed observable Markov decision processes (MOMDPs) can be used as a framework for decision making under uncertainty~\cite{ong2010planning}. In MOMDPs, the state space is partitioned into fully observable and partially observable states. Decisions are taken based on the fully observable states and the \textit{belief} representing a probability distribution over the partially observable states. 
Compared to partially observable Markov decision processes (POMDPs), which maintain a belief for all possible states~\cite{sondik1978optimal}, MOMDPs allow us to reduce the computational complexity of the policy synthesis process when both partial and full state observations are available~\cite{ong2010planning}.

In POMDPs and MOMDPs, the control objective is usually expressed as a reward maximization problem~\cite{sondik1978optimal}. However, reward maximization alone cannot fully encode the desired high-level objectives. Thus, researchers have focused on constrained POMDPs (CPOMDPs), where the synthesis goal is to compute a policy that maximizes the expected reward, while satisfying expected constraints. This problem was first studied in~\cite{isom2008piecewise}, where the authors presented an exact dynamic programming update to compute the optimal deterministic policy. The computational complexity of solving this problem is double exponential in the time horizon. But, the optimal solution can be approximated in polynomial time using point-based~\cite{kim2011point} and finite-state~\cite{poupart2015approximate} approximations.
% have been proposed to reduce the computational complexity associated with policy synthesis. 

Whenever temporal properties of the system are of interest, control objectives can also be expressed using Linear Temporal Logic (LTL) formulas~\cite{pnueli1977temporal}. The \textit{qualitative problem} of synthesizing a policy, which guarantees satisfaction of LTL formulas for POMDPs, is undecidable when searching over the set of feedback policies and EXPTIME-complete when designing finite-state controllers~\cite{chatterjee2015qualitative,chatterjee2016decidable,junges2018finite,junges2020enforcing,ahmadi2020stochastic}. 
When the system is uncertain, it may be impossible to design a policy that guarantees satisfaction of the specifications for all possible uncertainty realizations. In this case, it is desirable to solve the \textit{quantitative problem}, where the objective is to synthesize a policy that maximizes the probability of satisfying LTL specifications. The solution to this quantitative problem can be approximated by discretizing the belief space~\cite{nilsson2018toward}, leveraging finite state controllers~\cite{ahmadi2020stochastic} or using point-based and simulation-based strategies~\cite{bouton2020point,haesaert2019temporal,vasile2016control,haesaert2018temporal,wang2018bounded, lesser2016approximate}.
The optimal solution to quantitative problems is usually not unique; instead there exists a set of optimal control policies~\cite{ding2014optimal}.
% that maximize the probability of satisfying the specifications~\cite{ding2014optimal}. 
For this reason, it is often preferable to compute an optimal policy, which maximizes an expected reward while satisfying LTL specifications~\cite{karaman2008optimal,wolff2016optimal,ding2014optimal}.

In this work, we consider \textit{time-optimal quantitative} problems, where the goal is to minimize the expected time to complete the task while satisfying syntactically co-safe LTL (scLTL) specifications. These problems have been studied for deterministic systems in~\cite{karaman2008optimal, wolff2016optimal} and in~\cite{ding2014optimal, teichteil2012path, teichteil2012stochastic,trevizan2017efficient,lacerda2019probabilistic,kolobov2012theory} for Markov decision processes.
To the knowledge of the authors, this is the first work that studies time-optimal quantitative problems for mixed observable Markov decision processes.
Our contribution is threefold. First, we present a dynamic programming update to compute the value function associated with the time-optimal quantitative problem. Second, we propose a point-based strategy to approximate the optimal value function and we show that our approach maximizes a lower bound of the closed-loop probability of satisfying the specifications. Finally, we compare our method with standard time-optimal and quantitative~policies. 
We show that the proposed strategy allows us to minimize the expected time to complete the task without compromising the probability of satisfying the specifications.
% Therefore, the proposed strategy can be used as a high-level decision maker for navigation tasks with mixed observability, as shown in the example section.

\noindent
\textit{Notation: } For a vector $\alpha \in \mathbb{R}^n$ and an integer $s \in \{1,\ldots, n\}$ we use $\alpha(s)$ to denote the $s$th component of the vector $\alpha$ and $\alpha^\top$ to indicate its transpose. For a function $V : \mathbb{R}^n \rightarrow \mathbb{R}$, $V(\alpha)$ denotes the value of the function $V$ at $\alpha$. Throughout the paper, we will use capital letters to indicate functions and lower letters to indicate vectors.
Given two sets $\mathcal{A}$ and $\mathcal{B}$, 
the set minus operation is denoted as $\mathcal{A}\setminus\mathcal{B}$ and the Cartesian product as $\mathcal{A}\times\mathcal{B}$.  Furthermore, we define the indicator function $\mathds{1}_\mathcal{A}(x) = 1$ if $x\in \mathcal{A}$ and $\mathds{1}_\mathcal{A}(x) = 0$ otherwise. The vectors of ones is written as $1_{n} \in \mathbb{R}^{n}$ and zeros as $0_{n} \in \mathbb{R}^{n}$. Finally, given two sets of vectors $\Gamma=\{ \gamma_i | \forall i\in\{1, \ldots, n_\gamma \}\}$ and $\Lambda=\{ \lambda_j  | \forall j\in\{1, \ldots, n_\lambda \}\}$ we denote $\Gamma \oplus\Lambda = \{ \gamma_i +\lambda_j|  \forall i\in\{1, \ldots, n_\gamma \},  \forall j\in\{1, \ldots, n_\lambda \} \}$.

\section{Background}

In this section, we introduce some definitions and assumptions used in the sequel.

\noindent
\textbf{Mixed Observable Markov Decision Process}\\
A MOMDP provides a sequential  decision-making formalism for high-level planning under mixed full and partial observations~\cite{ong2010planning}. 
% At every time step, the agents take actions and receive observations. 
% These observations are shared via (noise and delay free) communication and the agents decide in a centralized framework.
% \vspace{0.2cm}
%\begin{definition}\label{defi:mpomdp}
More formally, a MOMDP $\mathcal{M}$ is a tuple $\left( \mathcal{S}, \mathcal{E}, \mathcal{A},\mathcal{Z}, T_s, T_e, O \right)$, where 
\begin{itemize}

	\item $\mathcal{S}=\{1,\ldots,|\mathcal{S}|\}$ is a set of fully observable states;
	
	\item $\mathcal{E}=\{1,\ldots,|\mathcal{E}|\}$ is a set of partially observable states;
    
    \item $\mathcal{A}=\{1,\ldots,|\mathcal{A}|\}$ is a set of actions;

	\item $\mathcal{Z}=\{1, \ldots,|\mathcal{Z}|\}$ is the set of observations for the partially observable state $e\in \mathcal{E}$;
			
	\item The function $T_s:\mathcal{S}\times \mathcal{E} \times \mathcal{A}\times \mathcal{S}\rightarrow [0,1]$ describes the probability of transitioning to a state $s'$ given the action $a$ and the system's state $(s,e)$, i.e., $T_s(s, e, a, s')\!:=\!\mathbb{P}(s_{k+1}\!=\!s'|s_{k}\!=\!s, e_{k}\!=\!e,a_{k}\!=\!a)$;

    \item The function $T_e: \mathcal{S} \times\mathcal{E}\times \mathcal{A}\times \mathcal{S}\times \mathcal{E}\rightarrow [0,1]$ describes the probability of transitioning to a state $e'$ given the action $a$, the successor observable state $s'$ and the system's current state $(s,e)$, i.e., $T_e(s, e, a, s', e')
	    :=\!\mathbb{P}(e_{k+1}\!=\!e'|s_{k}\!=\!s, e_{k}\!=\!e,a_{k}\!=\!a, s_{k+1}\!=\!s')$;

	\item The function $O:\mathcal{S}\times \mathcal{E} \times \mathcal{A} \times \mathcal{Z} \rightarrow [0,1]$ describes the probability of observing the measurement $z\in \mathcal{Z}$, given the current state of the system $(s',e') \in \mathcal{S} \times \mathcal{E}$ and the action $a$ applied at the previous time step, i.e., $O(s',e', a, z) :=  P(z_k\!=\!z|s_{k}\!=\!s', e_k\!=\! e',a_{k-1}\!=\!a)$;
\end{itemize}
%\end{definition}
MOMDPs were introduced in~\cite{ong2010planning} to model systems where a subspace of the state space is perfectly observable. The advantage of distinguishing between fully and partially observable states is that a belief state is needed only for the partially observable states. Thus, we introduce the belief vector $\be \in \Be=\{ \be \in \mathbb{R}^{|\mathcal{E}|} : \sum_{e=1}^{|\mathcal{E}|} \be(e) = 1\},$
where each entry $\be(e)$ represents the posterior probability that the partially observable state $e_k$ equals $e \in \mathcal{E}$.
% , i.e., $\be(e) = \mathbb{P}(e_k = e|a_0,...,a_{k-1},s_0,...,s_k,z_0,...,z_k),~\forall e \in \{1\ldots,|\mathcal{E}|\}$. 

\vspace{2pt}
\noindent
\textbf{Syntactically Co-Safe LTL Specifications}\\
We consider objectives which are expressed using scLTL specifications. An scLTL specification is defined as follows:
\begin{equation*}
    \psi : =  p ~ | ~ \neg p ~ | ~ \psi_1 \land \psi_2 ~ | ~ \psi_1 \lor \psi_2 ~ | ~ \psi_1 U \psi_2~ | ~ \bigcirc \psi ,
\end{equation*}
where the atomic proposition $p \in \{\texttt{true}, \texttt{false}\}$ and $\psi, \psi_1, \psi_2$ are scLTL formulas, which can be defined using the logic operators negation ($\neg$), conjunction ($\land$) and disjunction ($\lor$). Furthermore, scLTL formulas can be specified using the temporal operators until ($U$) and next ($\bigcirc$). Each atomic proposition $p_i$ is associated with a subset of the MOMDP state space $\mathcal{P}_i \subset \mathcal{S} \times \mathcal{E}$ and, for the MOMDP state $\omega_k=(s_k, e_k)$, the proposition $p_i$ is $\texttt{true}$ if $\omega_k \in \mathcal{P}_i$. Finally, satisfaction of a specification $\psi$ for the trajectory $\boldsymbol{\omega}_k = [\omega_k, \omega_{k+1}, \ldots]$, denoted by  
\begin{equation*}\label{eq:specFormula}
    \boldsymbol{\omega}_k \models \psi
\end{equation*}
is recursively defined as follows: $i)$ $\boldsymbol{\omega}_k \models p \iff$ $\omega_k \in \mathcal{P}$, $ii)$ $\boldsymbol{\omega}_k \models \psi_1 \land \psi_2 \iff$ $(\boldsymbol{\omega}_k \models \psi_1)\land (\boldsymbol{\omega}_k \models \psi_1)$, $iii)$ $\boldsymbol{\omega}_k \models \psi_1 \lor \psi_2 \iff$ $(\boldsymbol{\omega}_k \models \psi_1)\lor (\boldsymbol{\omega}_k \models \psi_1)$, $iv)$ $\boldsymbol{\omega}_k \models \psi_1 U \psi_2 \iff$    $\boldsymbol{\omega}_l \models \psi_2$ and $\boldsymbol{\omega}_j \models \psi_2,~\forall j \in \{k, \ldots, l-1\}$, $v)$ $\boldsymbol{\omega}_k \models \bigcirc \psi \iff$ $\boldsymbol{\omega}_{k+1} \models \psi$.

% \begin{assumption}
% Given a set $\mathcal{P}_i$ associated with an atomic proposition $p_i$, a set of observations $\{z_0, \ldots, z_k\}$ and a set of fully observable states $\{s_0, \ldots, s_k\}$, we can determine if the MOMDP state $\omega_k = (s_k, e_k) \in \mathcal{P}_i$.
% \end{assumption}

% Notice that the above assumption is satisfied in several robotic applications. For instance, when an agent has to avoid partially observable obstacles, it is reasonable to assume that collisions can be detected from measurements.

\begin{assumption}
We consider reachability specifications, which are satisfied when the observable state $s \in \mathcal{S}$ of a MOMDP $\mathcal{M}$ reaches a target set $\mathcal{T} \subset \mathcal{S}$.
\end{assumption}

The above assumption is not restrictive, as the problem of checking if a trajectory of a MOMDP satisfies any scLTL specification can be recasted as a reachability problem over an extended MOMDP. Please refer to~\cite[Chapter~3]{belta2017formal},\cite{ bouton2020point, nilsson2018toward} for further details on how to construct such extended~MOMDP.

\section{Time-Optimal Quantitative MOMDP}
% \vspace{2pt}
\noindent
\textbf{Problem Formulation}\\
In this section, we introduce the problem under study. Given a MOMDP $\mathcal{M}$ with observable states $\mathcal{S}$, partially observable states $\mathcal{E}$, and target set $\mathcal{T}$ associated with the specification $\psi$, we consider the finite-horizon problem of maximizing the probability of satisfying the specification $\psi$, while minimizing the expected time to complete the task. In particular, we define the following time-optimal quantitative constrained MOMDP (CMOMDP) 
\begin{equation}\label{eq:minTimeQua}
\begin{aligned}
\pi^{\textrm{TOQ}}=\underset{\pi}{\text{argmin}} ~~ & \mathbb{E}^\pi \Bigg[ \sum_{t = 0}^N \mathds{1}_{\mathcal{S}\setminus\T}(s_t) \Bigg]   \\
 \text{subject to} ~~ ~& \pi \in \argmax_{\kappa} \mathbb{P}^\kappa[\boldsymbol{\omega}  \models \psi], %, \forall t \in \{0,\ldots, N\!-\!1\}
\end{aligned}
\end{equation}
where $\mathbb{E}^\pi[\cdot]$ denotes the expectation under the policy $\pi$, $N$ represents the duration of the task and the indicator function $\mathds{1}_{\mathcal{S}\setminus\T}(s) = 1$ when $s \in \mathcal{S}\setminus\T$ and  $\mathds{1}_{\mathcal{S}\setminus\T}(s) = 0$ when $s \notin \mathcal{S}\setminus\T$. 
In the above problem, $\mathbb{P}^\kappa[\boldsymbol{\omega} \models \psi]$ represents the probability that the closed-loop trajectory under the policy $\kappa: \mathcal{S}  \times \Be \rightarrow \mathcal{A}$ will satisfy the specifications. Therefore, the optimal policy $\pi^{\textrm{TOQ}}: \mathcal{S}  \times \Be \rightarrow \mathcal{A}$ from~\eqref{eq:minTimeQua} maximizes the probability of satisfying the specifications while minimizing the expected time to complete the control task, i.e., reaching the set $\T\times\mathcal{E} \subset \mathcal{S} \times \mathcal{E}$.

\vspace{2pt}
\noindent
\textbf{Motivating Example}\\
Problem~\eqref{eq:minTimeQua} is motivated by the example shown in Figure~\ref{fig:closedLoopMotivatinComparison}, where a Segway has to collect science samples which may be located in the goal region (green) while avoiding known obstacle regions (dark brown) and exploring uncertain regions (light brown). 
The control problem can be formulated as a MOMDP, where the Segway's position is perfectly observed and only partial observations about the traversability of the uncertain regions are available. Figure~\ref{fig:closedLoopMotivatinComparison} shows an example with one goal region $\mathcal{G}$ and four uncertain regions $\mathcal{R}_1$, $\mathcal{R}_2$, $\mathcal{R}_3$, and $\mathcal{R}_4$, which may be traversable with probability $0.9$, $0.4$, $0.3$, and $0.5$, respectively. The Segway receives a perfect measurement when it is next to an uncertain region, otherwise the measurement is corrupted as described in the example section. In this example, the task has a duration of $N=30$ time steps. The control objective is given by the scLTL formula $\psi = \neg \texttt{Collision} U \texttt{Goal}$, where the atomic proposition $\texttt{Collision}$ is $\texttt{true}$ when the system is in a cell occupied by an obstacle and the atomic proposition $\texttt{Goal}$ is $\texttt{true}$ when the system reached the goal. 
\begin{figure}[t!]
    \centering
	\includegraphics[width= 0.86\columnwidth]{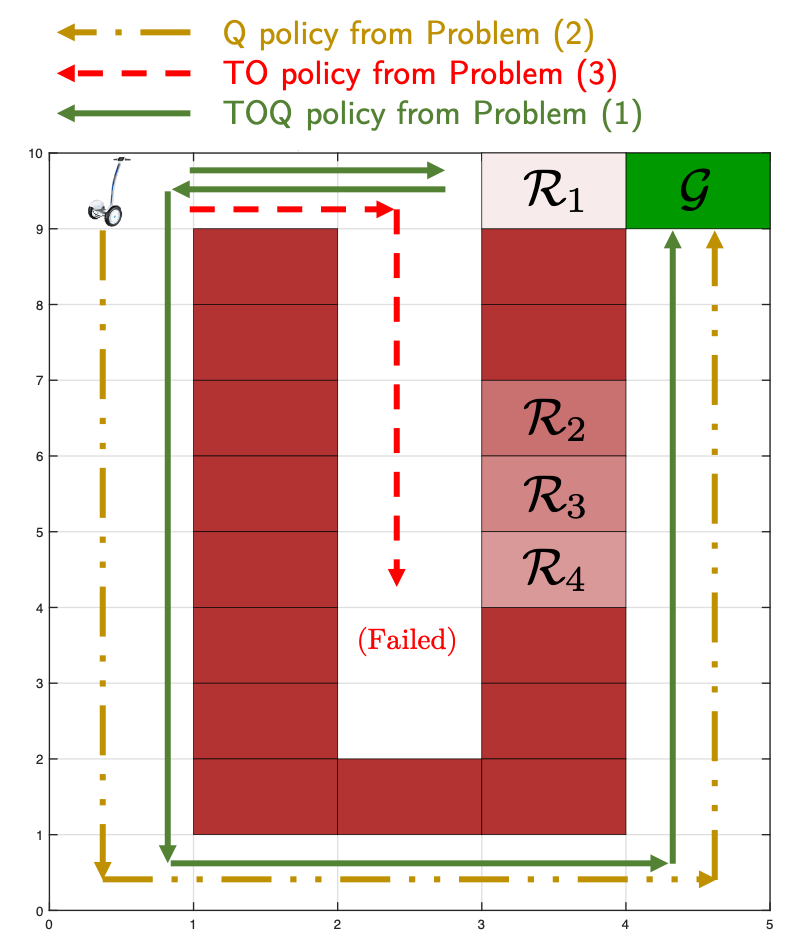}
    \caption{This figure shows a navigation example with several obstacles (dark brown), one goal region $\mathcal{G}$ (green) and four uncertain regions (light brown) $\mathcal{R}_1$, $\mathcal{R}_2$ $\mathcal{R}_3$, and $\mathcal{R}_4$. In this example, all uncertain regions are not traversable.}\label{fig:closedLoopMotivatinComparison}
\end{figure}

As discussed in~\cite{vasile2016control,nilsson2018toward,haesaert2019temporal}, a control policy can be computed maximizing the probability that $\psi$ is satisfied, i.e.,
\begin{equation}\label{eq:maxSpecs}
\begin{aligned}
\pi^{\textrm{Q}} = \underset{\kappa}{\text{argmax}} \quad \mathbb{P}^\kappa[\boldsymbol{\omega} \models \psi]. %, \forall t \in \{0,\ldots, N\!-\!1\}
\end{aligned}
\end{equation}
% The solution to the above quantitative problem can be approximated
% The optimal policy from the above quantitative problem 
% can be approximated as shown in~\cite{haesaert2018temporal}  \cite{nilsson2018toward} \cite{haesaert2019temporal}. 
% can be approximated using finite state controllers~\cite{ahmadi2020stochastic} and point-based methods~\cite{haesaert2019temporal}  \cite{vasile2016control}.  
Alternatively, a control policy can be synthesized minimizing the expected time to complete the task. The time-optimal problem is given by the standard reward minimization:
\begin{equation}\label{eq:minTime}
\begin{aligned}
\pi^{\textrm{TO}} = \underset{\pi}{\text{argmin}} ~~ & \mathbb{E}^\pi \Bigg[\sum_{t = 0}^N \mathds{1}_{\mathcal{S}\setminus\T}(s_t) \Bigg]   \\
\end{aligned},
\end{equation}
where the indicator function $\mathds{1}_{\mathcal{S}\setminus\T}$ is defined as in~\eqref{eq:minTimeQua}.
Notice that the solution to the above minimization problem can be approximated with point-based methods~\cite{pineau2003point} or finite state controllers~\cite{PoupartB03}.

Figure~\ref{fig:closedLoopMotivatinComparison} shows the closed-loop behaviors associated with the control policies from Problems~\eqref{eq:minTimeQua}--\eqref{eq:minTime}.
The Time-Optimal (TO) policy from Problem~\eqref{eq:minTime} steers the system beside the uncertain regions to collect perfect measurements about the traversability of the terrain. In this example, all uncertain regions are not traversable and therefore the control policy from Problem~\eqref{eq:minTime} fails to reach the goal state $\mathcal{G}$ in $N=30$ time steps. 
On the other hand, the Time-Optimal Quantitative (TOQ) policy from Problem~\eqref{eq:minTimeQua} and the Quantitative (Q) policy from Problem~\eqref{eq:maxSpecs}, which are designed to maximize the probability of satisfying the specification, reach the goal set
$\G$. Finally, we notice that the TOQ policy first explores region~$\mathcal{R}_1$ and then takes the path around the obstacle to reach the goal. This behavior minimizes the expected time to complete the task, as region~$\mathcal{R}_1$ may be traversable with probability $0.9$.
% the expected time to complete the task associated with the TOQ policy is lower than the one associated with the Q policy, because the TOQ policy from Problem~\eqref{eq:minTimeQua} explores region~$\mathcal{R}_4$ before reaching the goal. 
Thus, this example shows the advantage of synthesizing TOQ policies, which minimize the expected time to complete the task, while guaranteeing that the probability of satisfying the specifications is maximized.

\section{Exact Dynamic Programming Update}
In this section, we first show that the optimal value function $V^*_k(s, \cdot):\Be \rightarrow \mathbb{R}$ of the time-optimal quantitative problem~\eqref{eq:minTimeQua} is piecewise affine for all $s \in \mathcal{S}$ and $k \in \{0,\ldots,N\}$. Afterwards, following the approach presented in~\cite{isom2008piecewise}, we define a pair of support vectors which characterize the optimal value function $V^*_k(s, \cdot)$ for all $s \in \mathcal{S}$ and $k \in \{0,\ldots,N\}$.

As shown in~\cite{ong2010planning}, the  synthesis problem can be reformulated as a stochastic optimal control problem for a fully observable uncertain system, where the states are the belief $\be$ and the fully observable state $s$ of the MOMDP.
% \footnote{In the case of partially observable Markov decision processes (POMDP) the problem is reformulated as a stochastic control problem over the belief state $b$. In the MOMDP settings we only need to maintain a belief for the partially observable state $e$, as the state $s$ is perfectly observed~\cite{ong2010planning}.}.
Indeed, the belief evolves accordingly to the following update equation: 
\begin{equation}\label{eq:beliefUpdateSingle}
\begin{aligned}
    \be'(e') =&\eta O(s',e', a, z) \\
    &\times \sum_{e\in\mathcal{E}}T_s(s,e,a,s')T_e(s,e,a,s',e')\be(e),
\end{aligned}
\end{equation}
where the scalar $\eta = 1/ P(z, s' | s, \be, a)$ is a normalization constant~\cite{ong2010planning, peron2017fast}.

Next, we introduce two lemmas that allow us to reformulate Problem~\eqref{eq:maxSpecs} and Problem~\eqref{eq:minTime} as standard reward maximization problems. Afterwards, we will leverage these results to derive an exact dynamic programming update for the time-optimal quantitative Problem~\eqref{eq:minTimeQua}.

\begin{lemma}\label{lemma:qual}
Consider a MOMDP $\mathcal{M}$ with terminal set $\T\times\mathcal{E}$ and a finite horizon $N$. The probability that the quantitative policy $\pi^{\textrm{Q}}$ from Problem~\eqref{eq:minTime} satisfies the specifications~is
\begin{equation*}
\underset{\kappa}{\text{max}}~ \mathbb{P}^\kappa[\boldsymbol{\omega} \models \psi] = \bar J_0(s, \be),
\end{equation*}
where the optimal value function $\bar J_0$ is given by the following dynamic programming recursion:
\begin{equation}\label{eq:optConstrVal}
    \begin{aligned}
    \bar{J}_k(s,\be) &
    = \mathds{1}_{\T}(s) + \mathds{1}_{\mathcal{S}\setminus\T}(s) \max_{a\in\mathcal{A}}\mathbb{E}[\bar J_{k+1}(s',\be')|s,\be,a]\\
    % &= \max \Big[ \mathds{1}_{\T}(s), \max_{a \in \mathcal{A}} \mathbb{E}[\bar J_{k+1}(s', \be')|s,\be] \Big].
    \end{aligned}
\end{equation}
with $\bar J_N(s, \cdot) = \mathds{1}_{\T}(s)$ for all s $\in \mathcal{S}$. Furthermore, the optimal value function $\bar{J}_k(s,\cdot):\Be \rightarrow [0,1]$ is piecewise-affine for all $k\in\{0,\ldots,N\}$ and for all $s\in\mathcal{S}$.
\end{lemma}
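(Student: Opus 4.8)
The plan is to prove the two claims separately, both by backward induction on the time index $k$ from $N$ down to $0$, exploiting the belief-MDP reformulation in which the pair $(s,\be)$ is a fully observable Markov state evolving according to the belief update~\eqref{eq:beliefUpdateSingle}. For the value-function identity, I would first invoke Assumption~1 to rewrite the satisfaction event: since $\psi$ is a reachability specification to $\T$, the trajectory $\boldsymbol{\omega}$ satisfies $\psi$ if and only if $s_t\in\T$ for some $t\in\{0,\ldots,N\}$, so $\mathbb{P}^\kappa[\boldsymbol{\omega}\models\psi]$ equals the probability of hitting $\T$ within the horizon. I would then interpret $\bar J_k(s,\be)$ as the maximal such hitting probability over the remaining steps $\{k,\ldots,N\}$ starting from $(s,\be)$ at time $k$. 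The base case $\bar J_N(s,\cdot)=\mathds{1}_\T(s)$ holds because with no remaining transition the target is reached iff $s\in\T$. For the inductive step, the principle of optimality on the fully observable belief-MDP yields~\eqref{eq:optConstrVal}: if $s\in\T$ the specification is already met and stays satisfied regardless of later transitions (value $1$), whereas if $s\in\mathcal{S}\setminus\T$ the optimal continuation probability is obtained by maximizing over the first action the expected value of $\bar J_{k+1}$, using the tower property together with the Markov property of $(s,\be)$. Evaluating at $k=0$ gives $\max_\kappa\mathbb{P}^\kappa[\boldsymbol{\omega}\models\psi]=\bar J_0(s,\be)$.

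Second, I would establish that $\bar J_k(s,\cdot)$ is piecewise-affine by showing inductively that it is the upper envelope of a finite set of hyperplanes, i.e. $\bar J_k(s,\be)=\max_{\alpha\in\Gamma_{k,s}}\alpha^\top\be$ for some finite $\Gamma_{k,s}\subset\mathbb{R}^{|\mathcal{E}|}$ (so in fact convex piecewise-affine). The base case is immediate: $\bar J_N(s,\be)=\mathds{1}_\T(s)=\big(\mathds{1}_\T(s)\,1_{|\mathcal{E}|}\big)^\top\be$ because $\sum_{e}\be(e)=1$, so a single $\alpha$-vector suffices. Assuming the representation at step $k+1$ for every $s'$, the only nontrivial case is $s\in\mathcal{S}\setminus\T$, where I must show the map $\be\mapsto\max_{a}\mathbb{E}[\bar J_{k+1}(s',\be')\mid s,\be,a]$ inherits the envelope form.

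The crux of the argument, and the step I expect to be the main obstacle, is handling the belief update~\eqref{eq:beliefUpdateSingle} inside the expectation without introducing nonlinearity in $\be$. Expanding the expectation over the next observable state $s'$ and observation $z$ gives $\mathbb{E}[\bar J_{k+1}(s',\be')\mid s,\be,a]=\sum_{s'}\sum_{z}P(s',z\mid s,\be,a)\,\bar J_{k+1}(s',\be')$. Substituting $\bar J_{k+1}(s',\be')=\max_{\alpha\in\Gamma_{k+1,s'}}\alpha^\top\be'$, pulling the nonnegative weight $P(s',z\mid s,\be,a)$ inside the maximum, and using that this weight equals $1/\eta$ by the definition of the normalization constant, the factor $\eta$ in the belief update cancels exactly; each summand then becomes $\max_{\alpha}\beta_{s',z,a,\alpha}^\top\be$, where $\beta_{s',z,a,\alpha}(e)=\sum_{e'}\alpha(e')\,O(s',e',a,z)\,T_s(s,e,a,s')\,T_e(s,e,a,s',e')$ is a fixed vector independent of $\be$ — this is the MOMDP analogue of the classical Smallwood--Sondik cancellation. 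Each summand is thus convex piecewise-affine in $\be$, and it remains only to close the representation under the two remaining operations: summing over $(s',z)$ corresponds to the cross-sum $\oplus$ of the associated vector sets (via $\sum_i\max_{\alpha\in\Gamma_i}\alpha^\top\be=\max_{\alpha\in\bigoplus_i\Gamma_i}\alpha^\top\be$, which is exactly why $\oplus$ was introduced in the notation), while maximizing over the finitely many actions corresponds to the union $\Gamma_{k,s}=\bigcup_{a}\Gamma_{k,s,a}$. Both operations preserve the finite-envelope form, completing the induction and hence the proof that $\bar J_k(s,\cdot)$ is piecewise-affine.
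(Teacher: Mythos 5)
Your proposal is correct and follows essentially the same route as the paper: reduce satisfaction of $\psi$ to a finite-horizon reachability probability (Assumption~1), obtain the recursion~\eqref{eq:optConstrVal} from dynamic programming on the fully observable belief-MDP, and prove piecewise-affinity by backward induction using the Smallwood--Sondik cancellation of the normalization constant $\eta$, with cross-sums over $(s',z)$ and unions over actions preserving the finite envelope form. The only cosmetic difference is that the paper imports the reachability recursion from Theorem~4 of the cited stochastic-reachability reference rather than re-deriving it via the principle of optimality as you do.
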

\vspace{0.2cm}
\begin{proof}
Notice that, given a policy $\kappa : \mathcal{S} \times \Be \rightarrow \mathcal{A}$, the probability of satisfying the specification is given by the probability of reaching the terminal set $\T$, i.e.,
\begin{equation*}\label{eq:reachProbSt}
\begin{aligned}
    \mathbb{P}^\kappa[\boldsymbol{\omega} \models \psi] &=\mathbb{P}^\kappa\big[ \exists k\in\{0,\ldots, N\}: s_k \in \T,  \\
    & \quad \quad \quad s_j \in \mathcal{S}\setminus\T, \forall j \in\{0,\ldots, k-1\} \big] \\
    &=\mathbb{E}^\kappa\Bigg[ \sum_{j=0}^N \Bigg( \prod_{i=0}^{j-1} \mathds{1}_{\mathcal{S} \setminus \T}(s_i) \Bigg) \mathds{1}_{\T}(s_j) \Bigg],
\end{aligned}
\end{equation*}
where $\mathbb{E}^\kappa[\cdot]$ denotes the expectation under the policy $\kappa$. For more details on the above stochastic reachability problem please refer to~\cite{summers2010verification}.
% Furthermore, from~\cite[Lemma~4]{summers2010verification}, we have that the value function associated with the above probability can be computed by the following recursion:
% \begin{equation*}
% \begin{aligned}
%     J^\kappa_k(s,\be) & = \mathds{1}_{\T}(s) + \mathds{1}_{\mathcal{S}\setminus\T}(s) \mathbb{E}^\kappa[J_{k+1}^\kappa(s',\be')|s,\be]
% \end{aligned}
% \end{equation*}
% with $J^\kappa_N(s,\cdot)=1$ if $s\in \T$ and $J^\kappa_N(s,\cdot)=0$ if $s\notin \T$.
% \begin{equation*}
%     \begin{aligned}
%             J^\kappa_N(s,\be) = \begin{cases} 1 = \sum_{i = 1}^{|\mathcal{E}|}\be(i) & \mbox{ If } s\in \T \\
%     0 & \mbox{ Else } 
%         \end{cases}
%     \end{aligned}.
% \end{equation*}

% By standard dynamic programming arguments and the above equation, we have that the optimal value function for the quantitative Problem~\eqref{eq:maxSpecs} is given by
Furthermore from~\cite[Theorem~4]{summers2010verification}, we have that the optimal value function $\bar J_k: \mathcal{S} \times \Be \rightarrow \mathbb{R}$, which is associated with the optimal policy that maximizes the probability of reaching the set $\T$, is given by the following recursion
\begin{equation}\label{eq:valStep1}
\begin{aligned}
    \bar J_k(s,\be) &= \mathds{1}_{\T}(s) + \mathds{1}_{\mathcal{S}\setminus\T}(s) \max_{a\in\mathcal{A}}\mathbb{E}[\bar J_{k+1}(s',\be')|s,\be,a]
\end{aligned}
\end{equation}
where $\bar J_N(s,\cdot)= \mathds{1}_\T(s)$. 
Next, we show by induction that $\bar J_k(s,\cdot):\Be \rightarrow \mathbb{R}$ is piecewise affine for all $k \in \{0,\ldots, N\}$ and $s\in\mathcal{S}$. Assume that $\bar J_{k+1}(s, \cdot)$ is piecewise affine for a set of support vectors $\Lambda_s$, i.e., $\bar J_{k+1}(s, \be) = \max_{\beta \in \Lambda_s} \beta^\top \be =\max_{\beta \in \Lambda_s}\sum_e \beta(e) \be(e) $. 
Then, using the belief update~\eqref{eq:beliefUpdateSingle} and the definition of the value function $\bar J_{k+1}(s, \cdot)$, we have that
\begin{equation}\label{eq:midStep}
\begin{aligned}
    &\mathbb{E}[\bar J_{k+1}(s',\be')|s,\be,a] = \sum_{s',z} P(s',z|s,\be,a) J_{k+1}(s',\be')\\
    &= \sum_{s',z} P(s',z|s,\be,a) \max_{\beta \in \Lambda_s}  \sum_{e'}  \beta(e') \be'(e') \\
    &= \sum_{s',z} \frac{1}{\eta} \max_{\beta \in \Lambda_s}  \sum_{e'} \beta(e') \eta O(s',e',a,z) \sum_{e} T_s(s,e,a,s')\\
    &\qquad \qquad \qquad \qquad \qquad \qquad \qquad \times T_e(s,e,a,s',e')\be(e).
\end{aligned}
\end{equation}
Now define $\beta'(e) = \sum_{e'} F(s,e,a,s',e',z)\beta(e')$ for
\begin{equation}\label{eq:Fscalar}
\begin{aligned}
F(s,e,a,s',e',z)= T_s(s,e,a,s')&T_e(s,e,a,s',e')\\
&~~\times O(s',e',a,z),
\end{aligned}
\end{equation} 
then equation~\eqref{eq:midStep} can be rewritten as 
\begin{equation}\label{eq:DPupdate}
\begin{aligned}
    \mathbb{E}[&\bar J_{k+1}(s',\be')|s,\be,a] \\
    & = \sum_{s',z} \max_{\beta \in \Lambda_s} \sum_{e} \be(e) \sum_{e'} F(s,e,a,s',e',z)\beta(e') \\
    &= \sum_{s',z} \max_{\beta \in \Lambda_s} \be^\top \beta'.
\end{aligned}
\end{equation}
Equation~\eqref{eq:DPupdate} implies that the conditional expectation in~\eqref{eq:valStep1} is a piecewise affine function of $\be$. Therefore, $\bar J_{k}(s,\cdot)$ is piecewise affine as it is given by the summation and the point-wise maximization of picecewise affine functions for all $ s \in \mathcal{S}$. The proof is concluded by induction on $k$ as the value function $\bar J_N(s,\cdot)$ is piecewise affine for all $s \in \mathcal{S}$.
\end{proof}

\begin{lemma}\label{lemma:maxProblem}
Consider a MOMDP $\mathcal{M}$ with terminal set $\T\times\mathcal{E}$ and a finite horizon $N$. The optimal control policy $\pi^{\textrm{TO}}$ from Problem~\eqref{eq:minTime} is the optimizer of the following problem:
\begin{equation*}\label{eq:maxGoal}
    \max_\pi~ \mathbb{E}^\pi \bigg[\sum_{t = 0}^N \mathds{1}_{\T}(s_t) \bigg].
\end{equation*}
% Furthermore, we have that
% \begin{equation*}
%     \min_\pi~ \mathbb{E}^\pi \Bigg[\sum_{t = 0}^N \mathds{1}_{\mathcal{S}\setminus\T}(s_t) \Bigg]+\max_\pi ~\mathbb{E}^\pi \Bigg[\sum_{t = 0}^N \mathds{1}_{\T}(s_t) \Bigg] = N + 1.
% \end{equation*}
\end{lemma}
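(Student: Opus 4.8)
The plan is to exploit the fact that the target set $\T$ and its complement $\mathcal{S}\setminus\T$ partition the observable state space $\mathcal{S}$, so that the two indicator functions appearing in the two problems are complementary at every stage. Concretely, for every $s \in \mathcal{S}$ exactly one of $s \in \T$ or $s \in \mathcal{S}\setminus\T$ holds, hence $\mathds{1}_{\T}(s) + \mathds{1}_{\mathcal{S}\setminus\T}(s) = 1$. This pointwise identity is the only structural fact the argument needs.

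First I would substitute this complementarity relation into the running cost of Problem~\eqref{eq:minTime}. Evaluating the identity along an arbitrary closed-loop trajectory and summing over the horizon yields
\[
\sum_{t=0}^N \mathds{1}_{\mathcal{S}\setminus\T}(s_t) = \sum_{t=0}^N \big(1 - \mathds{1}_{\T}(s_t)\big) = (N+1) - \sum_{t=0}^N \mathds{1}_{\T}(s_t).
\]
Next I would take the expectation under a generic policy $\pi$ and use linearity of expectation together with the fact that $(N+1)$ is a deterministic constant that does not depend on $\pi$:
\[
\mathbb{E}^\pi\bigg[\sum_{t=0}^N \mathds{1}_{\mathcal{S}\setminus\T}(s_t)\bigg] = (N+1) - \mathbb{E}^\pi\bigg[\sum_{t=0}^N \mathds{1}_{\T}(s_t)\bigg].
\]

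Since the additive term $(N+1)$ is independent of the optimization variable, minimizing the left-hand side over $\pi$ is equivalent to maximizing $\mathbb{E}^\pi[\sum_{t} \mathds{1}_{\T}(s_t)]$, and the two problems share the same set of optimizers. I would conclude that $\pi^{\textrm{TO}}$ coincides with the optimizer of the stated maximization problem.

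I do not anticipate any substantive obstacle: the claim is a one-line reformulation whose only prerequisite is the disjointness and coverage of $\{\T, \mathcal{S}\setminus\T\}$ inside $\mathcal{S}$. The single point that warrants mild care is verifying that the constant $(N+1)$ is genuinely policy-independent, that is, it counts the fixed number of stages rather than a random hitting time, so that the equivalence between the argmin and the argmax is exact and not merely up to a policy-dependent shift.
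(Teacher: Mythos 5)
Your proposal is correct and follows essentially the same argument as the paper: both exploit the pointwise identity $\mathds{1}_{\mathcal{S}\setminus\T}(s) = 1-\mathds{1}_{\T}(s)$, discard the resulting policy-independent constant, and flip the sign to convert the minimization into the stated maximization. Your explicit remark that the constant is $(N+1)$ (a fixed stage count, not a random hitting time) is a minor elaboration of the same step the paper performs implicitly.
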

% \vspace{0.2cm}
\begin{proof} Notice that by definition $\mathds{1}_{\mathcal{S} \setminus \T}(s_t) = 1-\mathds{1}_{\T}(s_t), \forall s \in \mathcal{S}$. Thus, we have that 
\begin{equation*}
\begin{aligned}
\argmin_\pi \mathbb{E}^\pi \bigg[\sum_{t = 0}^N \mathds{1}_{\mathcal{S} \setminus \T}(s_t) \bigg] \! = \! \argmin_\pi \mathbb{E}^\pi \bigg[\sum_{t = 0}^N ( 1-\mathds{1}_{\T}(s_t) ) \bigg]\\
\! = \! \argmin_\pi \mathbb{E}^\pi \bigg[\sum_{t = 0}^N - \mathds{1}_{\T}(s_t) \bigg] = 
\argmax_\pi \mathbb{E}^\pi \bigg[\sum_{t = 0}^N \mathds{1}_{\T}(s_t) \bigg],
\end{aligned}
\end{equation*}
which concludes the proof.
\end{proof}
% \vspace{0.2cm}

\vspace{2pt}
\noindent
\textbf{Optimal Value Function}\\
% As we have discussed, Problem~\eqref{eq:minTimeQua} is a constrained reachability problem over the MOMDP. In Lemma~\ref{lemma:qual}, we have shown that the optimal value function for the quantitative Problem~\eqref{eq:maxSpecs} is piecewise-affine and we know that also the time-optimal value function for Problem~\eqref{eq:maxGoal} is piecewise-affine~\cite{sondik1978optimal}. Therefore, we can modify the strategy presented in~\cite{isom2008piecewise} to solve the constrained MOMDP from~\eqref{eq:minTimeQua}. 
In what follows, we leverage the dynamic programming update from Lemma~\ref{lemma:qual} and the maximization problem from Lemma~\ref{lemma:maxProblem} to design an exact dynamic programming update for the time-optimal quantitative problem~\eqref{eq:minTimeQua}.
We modify the strategy presented in~\cite{isom2008piecewise} to solve the CMOMDP from~\eqref{eq:minTimeQua}. The key idea is to construct a set of vector pairs $\langle \alpha_{s,k}^{i}, \beta_{s,k}^{i} \rangle$, which define the optimal value function 
% $V_k^*: \mathcal{S}\times \Be \rightarrow \mathbb{R}$ 
% . Indeed, from Lemma~\ref{lemma:maxProblem} we have that the value function associated with the quantitative problem is piecewise affine and we know from~\cite{isom2008piecewise} that the value function , thus the optimal value function $V_k^*: \mathcal{S}\times \Be \rightarrow \mathbb{R}$ can be written as
\begin{equation}\label{eq:valFuns}
\begin{aligned}
V_k^*(s,\be)  = \max_{\langle \alpha, \beta \rangle \in \Gamma_{s,k}^*}\quad & \alpha^\top \be \\    \text{subject to} \quad~ &\langle \alpha, \beta \rangle \in \argmax_{\langle \alpha, \beta \rangle \in \Gamma_{s,k}^*} \beta^\top \be, \\    
\end{aligned}
\end{equation}
where at time $k$ the set $\Gamma_{s,k}^*$ collects the support vector pairs associated with the observable state $s\in\mathcal{S}$. 

The support vectors can be updated using the following recursion:
\begin{equation}\label{eq:alphaBetaUpdate}
    \begin{aligned}
    \alpha_{s, k}^{a,z, s',i}(e) &=  \frac{\mathds{1}_{\G}(s)}{|\mathcal{Z}||\mathcal{S}|} +  \sum_{e'}  F(s, e, a, s', e', z)\alpha_{s', k+1}^{i}(e'),\\
        \beta_{s, k}^{a,z,s',i}(e) &= 
    \frac{\mathds{1}_{\G}(s)}{|\mathcal{Z}||\mathcal{S}|} +
    \mathds{1}_{\mathcal{Q}\setminus\G}(s) \sum\limits_{e'}  F(s, e, a, s', e', z)\\
    &\qquad\qquad\qquad\qquad\qquad\qquad\qquad \times\beta_{s', k+1}^{i}(e')\\
    \Gamma^{*,a}_{s,k} &= \oplus_{z \in \mathcal{Z}, s'\in\mathcal{S}} \{\langle \alpha_{s, k}^{a,z,s',i}, \beta_{s, k}^{a,z,s',i} \rangle | \\
    &\qquad\qquad\qquad\qquad\qquad\forall i \in \{1,\ldots, |\Gamma_{s,k+1}^*| \}\} \\
    \Gamma_{s,k}^* &= \cup_{a\in\mathcal{A}} \Gamma^{*,a}_{s,k}.
    \end{aligned}
\end{equation}
where the function $F$ is defined as in \eqref{eq:Fscalar} and 
\begin{equation}\label{eq:gammaInit}
    \begin{aligned}
        \Gamma_{s,N}^*=\begin{cases} \langle 1_{|\mathcal{Z}|}, 1_{|\mathcal{Z}|} \rangle & \mbox{ if } s \in \T, \\
        \langle 0_{|\mathcal{Z}|}, 0_{|\mathcal{Z}|} \rangle & \mbox{ otherwise}.
        \end{cases}
    \end{aligned}
\end{equation}
The backup update of the $\alpha$-vector is used to compute the support vectors associated with the cost and it was presented in~\cite{araya2010closer}.
On the other hand, the backup update of the $\beta$-vector
is designed based on the dynamic programming update~\eqref{eq:optConstrVal} from Lemma~\ref{lemma:qual} and it is a key contribution of this work.
The following lemma illustrates that the backup update of the $\beta$-vector, which defines the set of support vectors $\Gamma_{s,k}^*$ from~\eqref{eq:alphaBetaUpdate}, allows us to compute the probability that the time-optimal quantitative policy satisfies the specifications. 
% Therefore, the constraint from the optimal value function~\eqref{eq:valFuns} encodes the quantitative constraint from Problem~\eqref{eq:minTimeQua}. 

\begin{lemma}\label{th:constrValFun}
Let $\Gamma_{s,k}^{*}$ be the set of support vectors constructed using the dynamic programming recursion from~\eqref{eq:alphaBetaUpdate}. Then, the constraint value function
\begin{equation}\label{eq:optConstr}
    J^*_{k}(s, \be) =  \max_{\langle \alpha, \beta \rangle \in\Gamma_{s,k}^{*}} \beta^\top \be
\end{equation}
represents the probability that the time-optimal quantitative policy $\pi^\textrm{TOQ}$ satisfies the specifications, i.e., $J^*_{k}(s, \be) = \mathbb{P}^{\pi^\textrm{TOQ}}\big[[\omega_k,\ldots, \omega_N] \models \psi \big], \forall k\in \{0,\ldots, N\}$.
\end{lemma}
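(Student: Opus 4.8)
The plan is to prove the statement by backward induction on $k$, showing that the constraint value function $J^*_k(s,\be)=\max_{\langle\alpha,\beta\rangle\in\Gamma^{*}_{s,k}}\beta^\top\be$ obeys exactly the dynamic programming recursion~\eqref{eq:optConstrVal} of Lemma~\ref{lemma:qual}, and that the lexicographic policy $\pi^{\textrm{TOQ}}$ of~\eqref{eq:valFuns} realizes it in closed loop. For the base case $k=N$, equation~\eqref{eq:gammaInit} gives a single pair with $\beta=1_{|\mathcal{Z}|}$ when $s\in\T$ and $\beta=0_{|\mathcal{Z}|}$ otherwise; since $\be$ lies in the simplex $\Be$ its entries sum to one, so $J^*_N(s,\be)=\mathds{1}_{\T}(s)$. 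This matches $\mathbb{P}^{\pi^{\textrm{TOQ}}}[\omega_N\models\psi]=\mathds{1}_{\T}(s)$, because under the reachability specification the length-one trajectory $[\omega_N]$ satisfies $\psi$ if and only if $s_N\in\T$.

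For the inductive step I assume $J^*_{k+1}(s',\be')=\mathbb{P}^{\pi^{\textrm{TOQ}}}\big[[\omega_{k+1},\ldots,\omega_N]\models\psi\big]=\max_i(\beta^{i}_{s',k+1})^\top\be'$ for all $(s',\be')$, and first evaluate $\max_{\langle\alpha,\beta\rangle\in\Gamma^{*,a}_{s,k}}\beta^\top\be$ for a fixed action $a$. Because $\Gamma^{*,a}_{s,k}$ is the Minkowski sum $\oplus_{z,s'}$ of the per-$(z,s')$ pair sets and the index $i$ may be chosen independently for each $(z,s')$, the maximization distributes over the sum to give $\sum_{z,s'}\max_i(\beta^{a,z,s',i}_{s,k})^\top\be$. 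Substituting the $\beta$-backup from~\eqref{eq:alphaBetaUpdate} and using $\sum_e\be(e)=1$, the constant terms $\mathds{1}_{\G}(s)/(|\mathcal{Z}||\mathcal{S}|)$ accumulate over the $|\mathcal{Z}||\mathcal{S}|$ summands to $\mathds{1}_{\G}(s)$, while the remaining terms reproduce $\mathds{1}_{\mathcal{S}\setminus\G}(s)\sum_{z,s'}\max_i\be^\top\tilde\beta^{a,z,s',i}$, where $\tilde\beta^{a,z,s',i}(e)=\sum_{e'}F(s,e,a,s',e',z)\beta^{i}_{s',k+1}(e')$.

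The key identity, borrowed directly from the manipulation~\eqref{eq:midStep}--\eqref{eq:DPupdate} in the proof of Lemma~\ref{lemma:qual}, is that $(\beta^{i}_{s',k+1})^\top\be'=\eta\,\be^\top\tilde\beta^{a,z,s',i}$, with $\eta=1/P(z,s'\mid s,\be,a)>0$ the belief-normalization constant that does not depend on $i$. Hence maximizing $\be^\top\tilde\beta^{a,z,s',i}$ over $i$ selects exactly the pair attaining $J^*_{k+1}(s',\be')$, and summing the normalized terms over $(z,s')$ reconstructs $\mathbb{E}[J^*_{k+1}(s',\be')\mid s,\be,a]$. Therefore $\max_{\langle\alpha,\beta\rangle\in\Gamma^{*,a}_{s,k}}\beta^\top\be=\mathds{1}_{\G}(s)+\mathds{1}_{\mathcal{S}\setminus\G}(s)\,\mathbb{E}[J^*_{k+1}(s',\be')\mid s,\be,a]$. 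Taking the union $\Gamma^{*}_{s,k}=\cup_{a}\Gamma^{*,a}_{s,k}$ and pulling the $a$-independent term $\mathds{1}_{\G}(s)$ out of the maximum recovers precisely recursion~\eqref{eq:optConstrVal} with $\G=\T$, so by Lemma~\ref{lemma:qual} we have $J^*_k(s,\be)=\max_\kappa\mathbb{P}^\kappa\big[[\omega_k,\ldots,\omega_N]\models\psi\big]$.

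It then remains to show that the pointwise lexicographic policy $\pi^{\textrm{TOQ}}$ attains this maximum in closed loop. The action it selects at $(s,\be)$ comes from a pair in the $\beta$-argmax set and is thus probability-maximizing, and the index chosen inside the Minkowski sum maximizes $\be^\top\tilde\beta^{a,z,s',i}$, which by the $\eta$-identity coincides with the $\be'$-argmax at stage $k+1$. Consequently, whatever tie-break $\pi^{\textrm{TOQ}}$ performs at stage $k+1$ lands in the same $\beta$-argmax set and yields the same probability-to-go $J^*_{k+1}(s',\be')$, so the closed-loop probability equals $J^*_k(s,\be)$, closing the induction. I expect this last consistency argument to be the main obstacle: since $\pi^{\textrm{TOQ}}$ breaks ties with the cost vectors $\alpha$ rather than the constraint vectors $\beta$, one must verify that every admissible selection stays in the constraint-value argmax at every stage so that the $\alpha$-based tie-breaking never degrades the satisfaction probability encoded by $\beta$. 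The $\eta$-invariance relating the stage-$k$ and stage-$(k+1)$ maximizers is exactly what guarantees this, and articulating it cleanly is the crux of the proof.
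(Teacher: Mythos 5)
Your induction is, step for step, the paper's own argument: the base case from~\eqref{eq:gammaInit}, distributing the maximization across the Minkowski-sum structure of $\Gamma^{*,a}_{s,k}$, spreading the constant $\mathds{1}_{\T}(s)/(|\mathcal{Z}||\mathcal{S}|)$ over the $|\mathcal{Z}||\mathcal{S}|$ summands, and reusing the identity from~\eqref{eq:midStep}--\eqref{eq:DPupdate} in Lemma~\ref{lemma:qual} to reconstruct $\mathbb{E}[\bar J_{k+1}(s',\be')\,|\,s,\be,a]$, concluding $J^*_k(s,\cdot)=\bar J_k(s,\cdot)$ for all $k$ and $s$. Where you genuinely diverge is the endgame. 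The paper never reasons about the closed-loop behavior of a lexicographic policy: having shown $J^*_k=\bar J_k$, it invokes Lemma~\ref{lemma:maxProblem} and the definitions of Problems~\eqref{eq:minTimeQua} and~\eqref{eq:maxSpecs} --- the constraint in~\eqref{eq:minTimeQua} forces any feasible $\pi^{\textrm{TOQ}}$ to lie in $\argmax_\kappa \mathbb{P}^\kappa[\boldsymbol{\omega}\models\psi]$, so $\mathbb{P}^{\pi^{\textrm{TOQ}}}=\mathbb{P}^{\pi^{\textrm{Q}}}=\bar J_k=J^*_k$ holds essentially by definition. Consequently, what you flag as "the crux" (that $\alpha$-based tie-breaking never degrades the probability encoded by $\beta$) is not needed under the paper's reading of the lemma; your argument for it is nonetheless sound, since every action in the $\beta$-argmax set yields the same optimal probability-to-go, so any tie-break within that set preserves optimality. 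What your extra step buys is a detail the paper leaves implicit: it verifies that the concrete policy extracted from the vector pairs via~\eqref{eq:valFuns} is actually feasible for the constraint of~\eqref{eq:minTimeQua}, i.e., that the support-vector representation realizes a valid $\pi^{\textrm{TOQ}}$ rather than merely computing the value of one assumed to exist.
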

% \vspace{0.2cm}
\begin{proof}
% As in the POMDPs settings~\cite[Theorem~7.4.1]{krishnamurthy2016partially}, we will represent the value function as the solution of a linear program. 
% By definition we have that the the time-optimal quantitative policy~\eqref{eq:minTime} maximizes the probability of satisfying the specifications. Therefore, from Lemma~\ref{lemma:maxProblem} we have that 
% \begin{equation*}
%     \bar J_{k}(s, \be) = \mathbb{P}^{\pi^\textrm{TOQ}}\big[[\omega_k,\ldots, \omega_N] \models \psi \big], \forall k\in \{0,\ldots, N\}.
% \end{equation*}
First, we show by induction that $\bar J_{k}(s, \cdot) =  J_{k}^*(s, \cdot)$ for all $s\in\mathcal{S}$. Assume that $\Gamma_{s,k+1}^{*} = \Lambda_s$, which implies that
\begin{equation}\label{eq:indAss}
    J^*_{k+1}(s, \be)=\max_{\langle \alpha, \beta \rangle \in \Gamma_{s,k+1}^{*}} \be^\top \beta= \max_{ \beta \in \Lambda_s} \be^\top \beta = \bar J_{k+1}(s, \be).
\end{equation}
Then, from equations~\eqref{eq:valStep1} and~\eqref{eq:DPupdate}, we have that 
\begin{equation*}
\begin{aligned}
    \bar J_k(&s, \be) = \mathds{1}_\T(s) + \mathds{1}_{\mathcal{S} \setminus \T}(s) \max_{a \in \mathcal{A}} \sum_{s',z} \max_{\beta \in \Lambda_s} \be^\top \beta'\\
    & = \max_{a\in\mathcal{A}}\sum_{s',z} \Bigg[ \frac{\mathds{1}_\T(s)}{|\mathcal{Z}||\mathcal{S}|} + \mathds{1}_{\mathcal{S} \setminus \T}(s) \max_{\beta \in \Lambda_s}  \be^\top \beta'\Bigg]\\
     & = \max_{a\in\mathcal{A}}\sum_{s',z} \max_{\beta \in \Gamma_{s,k+1}^{*}}\Bigg[  \frac{\mathds{1}_\T(s)}{|\mathcal{Z}||\mathcal{S}|}1_{|\mathcal{E}|}^\top + \mathds{1}_{\mathcal{S} \setminus \T}(s)  (\beta')^\top \Bigg] \be \\
     &= \max_{a\in\mathcal{A}} \max_{\langle \alpha, \beta \rangle \in \Gamma_{s,k}^{*,a}} \beta^\top \be =\max_{\langle \alpha, \beta \rangle \in \Gamma_{s,k}^{*}} \beta^\top \be = J_k^*(s, \be),
\end{aligned}
\end{equation*}
where  $\beta'(e) = \sum_{e'} F(s,e,a,s',e',z)\beta(e')$, $1_{|\mathcal{E}|} \in \mathbb{R}^{|\mathcal{E}|}$ is a vector of ones, $1_{|\mathcal{E}|}^\top \be = 1$ and the sets of support vectors $\Gamma_{s,k}^{*,a}$ and $\Gamma_{s,k}^{*}$ are defined by the backup update~\eqref{eq:alphaBetaUpdate} for the set of support vectors $\Gamma_{s,k+1}^*$ from equation~\eqref{eq:indAss}.
% . In the above derivation, we leveraged the assumption on the set of support vectors $\Gamma$ which define both $J^*_{k+1}(s, \be)$ and $\bar J_{k+1}(s, \be)$. 
Finally, as $J^*_{N}(s, \cdot)=\bar J_{N}(s, \cdot), \forall s \in \mathcal{S}$ by induction we have that $J^*_{k}(s, \cdot)=\bar J_{k}(s, \cdot), \forall s \in \mathcal{S}~ \forall k \in \{0, \ldots, N\}$, which together with Lemma~\ref{lemma:maxProblem} and the definitions of Problems~\eqref{eq:minTimeQua} and~\eqref{eq:maxSpecs} imply that 
$J^*_{k}(s, \cdot)=\bar J_{k}(s, \cdot) = \mathbb{P}^{\pi^\textrm{Q}}\big[[\omega_k,\ldots, \omega_N] \models \psi \big]=\mathbb{P}^{\pi^\textrm{TOQ}}\big[[\omega_k,\ldots, \omega_N] \models \psi \big]$, $\forall k \in \{0, \ldots, N\}$ and $\forall s \in \mathcal{S}$.
\end{proof}
\vspace{0.2cm}

% Finally, we notice that at each time step the dynamic programming update from~\eqref{eq:alphaBetaUpdate} generates in the worst case $|\mathcal{A}| |\Gamma_{s,k+1}^{*}|^{|\mathcal{Z}|}$ new support vector pairs. Therefore, the computational complexity grows exponentially at each update. In the next section, we propose a point-based strategy, which allows us to approximate the optimal solution using a constant number of support vectors. 

\section{Point-Based Approximation}\label{sec:pointBased}
At each time step the dynamic programming update from~\eqref{eq:alphaBetaUpdate} generates in the worst case $|\mathcal{A}| |\Gamma_{s,k+1}^{*}|^{(|\mathcal{Z}|+|\mathcal{S}|)}$ new support vector pairs~\cite{pineau2003point}. 
% Therefore, the computational complexity grows exponentially.
In this section, we present a point-based update, where the optimal value function is approximated by a constant number of vectors computed for a set $\mathcal{D}_b=\{\be^{(1)}, \ldots, \be^{(n)}\}$ of $n$ discrete beliefs. 

The proposed point-based strategy is based on the update from equation~\eqref{eq:alphaBetaUpdate}. In particular, Algorithm~\ref{algo:backup} computes one pair of vectors $\langle \alpha^{a^*}, \beta^{a^*} \rangle$ that approximates the optimal value function~\eqref{eq:valFuns} at a point $(s,\be)$. In line~1 of Algorithm~\ref{algo:backup}, we compute the active pair of support vectors using the following expression\footnote{
% Compared to the backup update from~\cite{ong2010planning}, we propagate the belief through the cost and the constraint. 
For more details on the belief propagation please refer to~\cite{ong2010planning}.}
\begin{equation}\label{eq:updateVectors}
\begin{aligned}
\langle \alpha^{s', a, z}, \beta^{s', a, z}\rangle ~~ & \\
= \!\!\! \argmax_{\langle \alpha, \beta \rangle \in \Gamma_{s,k+1}}~ & \alpha^\top F_v(s,\be, a, s', z) \\    \text{subject to} \quad~ &\langle \alpha, \beta \rangle \in \!\!\! \argmax_{\langle \alpha, \beta \rangle \in \Gamma_{s,k+1}} \beta^\top F_v(s,\be, a, s', z), \\    
\end{aligned}
\end{equation}
where $F_v:\mathcal{S}\times\Be\times\mathcal{A}\times\mathcal{S}\times\mathcal{Z}\rightarrow \Be$ is the belief vector update, i.e., $\be' = F_v(s,\be, a, s', z)$. Afterwards, we update the support vectors pair associated with an action~$a$ (line~2). These vectors are then used to compute the set of admissible actions~(line~3) and the optimal action $a^*$~(line~4). Finally, we add the optimal pair $\langle \alpha^{a^*}, \beta^{a^*} \rangle$ to the set of support vector pairs $\Gamma_{s,k}$, which approximate the optimal value function $V^*_k(s, \cdot)$ from~\eqref{eq:valFuns} for all $s \in \mathcal{S}$.

\begin{algorithm}[h!]
\SetAlgoLined
\KwIn{$s, \be, \Gamma_{s,k+1}, \Gamma_{s,k}$ }
For all  $s' \in \mathcal{S}$, $a \in \mathcal{A}$, $z\in\mathcal{Z}$ 
\mbox{$\langle \alpha^{s', a, z}, \beta^{s', a, z} \rangle \leftarrow$ from Equation~\eqref{eq:updateVectors} }\;
For all $a \in \mathcal{A}$, $e \in \mathcal{E}$
\mbox{$\alpha^{a}(e) \leftarrow \mathds{1}_{\G}(s) + \sum_{s',z,e'}F(s, e, a, s', e', z) \alpha^{s', a, z}(e)$} 
\mbox{$\beta^{a}(e) \leftarrow \mathds{1}_{\G}(s)+ \mathds{1}_{\mathcal{S}\setminus\G}(s)\sum_{s',z,e'}F(s, e, a, s', e', z) $} \phantom{spacespacespacespacespacespacespac} $\times\alpha^{s', a, z}(e)$ \;
Compute $\mathcal{C} = \argmax_{a \in \mathcal{A}}(\be^\top \beta^a)$ \;
Compute $a^* = \argmax_{a \in \mathcal{C}}(\be^\top \alpha^a)$ \;
Add $\langle \alpha^{a^*}, \beta^{a^*} \rangle$ to $\Gamma_{s,k}$ \;
\KwOut{$\Gamma_{s,k}$}
\caption{\texttt{Backup}, $(\alpha,\beta)$-vectors computation}\label{algo:backup}
\end{algorithm}

\begin{algorithm}[h!]
\SetAlgoLined
\KwIn{$\Gamma_{s,k+1}$ }
\For {$s \in \mathcal{S}$}
{Initialize $\Gamma_{s,k}=\O$\;
\For {$\be \in \mathcal{D}_b$}{
$\Gamma_{s,k}$ $\leftarrow$ \texttt{Backup}($s$, $\be$, $\Gamma_{s,k+1}$, $\Gamma_{s,k}$) \;}
}
\KwOut{$\Gamma_{s,k}$} \caption{Value function update}\label{algo:valFunUpdate}
\end{algorithm}

The \texttt{Backup} function from Algorithm~\ref{algo:backup} is used to update the sets $\Gamma_{s,k}$, which define the value function approximation:
\begin{equation}\label{eq:appValFun}
\begin{aligned}
V_k(s,\be)  = \max_{\langle \alpha, \beta \rangle \in \Gamma_{s,k}}\quad & \alpha^\top \be \\    \text{subject to} \quad~ &\langle \alpha, \beta \rangle \in \argmax_{\langle \alpha, \beta \rangle \in \Gamma_{s,k}} \beta^\top \be. \\    
\end{aligned}
\end{equation}
For time $k\in\{0,\ldots,N-1\}$, the sets $\Gamma_{s,k}$ are recursively computed using Algorithm~\ref{algo:valFunUpdate}, which for all $s\in\mathcal{S}$ computes the support vector pairs at all belief points $\be \in \mathcal{D}_b$. 
The recursion is initialized setting $\Gamma_{s,N}$ equal to $\Gamma^*_{s,N}$.
% from equation~\eqref{eq:gammaInit}.
% Notice that $V_k$ from~\eqref{eq:appValFun} is defined by $|\mathcal{D}_b|$ support vector pairs. Therefore, as discussed in~\cite{pineau2003point}, the computational complexity of this approximation is polynomial time. 
% The properties of the proposed approximation and comparison with standard point-based strategy for constrained POMDPs are discussed in the Section~\ref{sec:diss}.

\begin{figure*}[h!]
    \centering
	\includegraphics[width= 1.85\columnwidth]{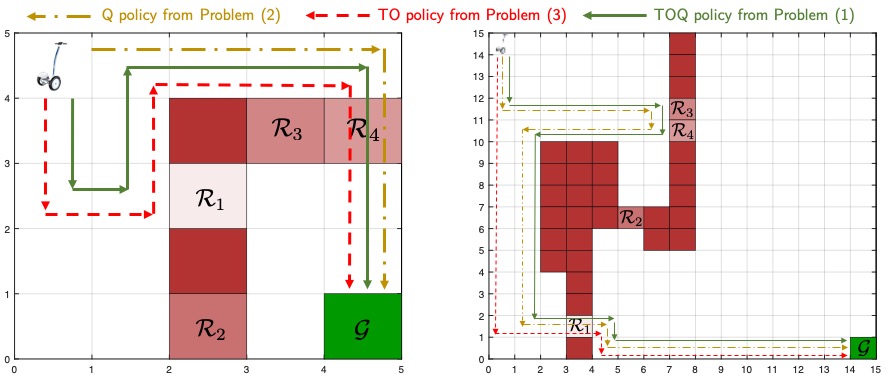}
    \caption{Grid worlds with several obstacles (dark brown), one goal region $\mathcal{G}$ (green) and four uncertain regions (light brown) $\mathcal{R}_1$, $\mathcal{R}_2$ $\mathcal{R}_3$, and $\mathcal{R}_4$, which may be free with probability $0.9$, $0.4$, $0.3$, and $0.5$. The closed-loop trajectories are associated with different environment realizations. In particular, $\mathcal{R}_4$ and $\mathcal{R}_1$ are traversable in the 5x5 and 15x15 grid worlds, respectively.}
    \label{fig:marsMission2}
\end{figure*}

Finally, we show that the sets of support vector pairs $\Gamma_{s,k}$ computed using Algorithms~\ref{algo:backup} and~\ref{algo:valFunUpdate} allow us to define an approximated constraint value function, which is a lower-bound of the probability of satisfying the specifications.
\vspace{0.2cm}
\begin{theorem}\label{th:appValFun}
Let $\Gamma_{s,k}$ be the set of support vectors constructed using the point-based strategy from Algorithms~\ref{algo:backup} and~\ref{algo:valFunUpdate}. Then, the approximated constraint value function
\begin{equation}\label{eq:appConstrValFun}
    J_{k}(s, \be) =  \max_{\langle \alpha, \beta \rangle \in\Gamma_{s,k}} \beta^\top \be
\end{equation}
is a lower-bound of the probability that the control policy $\pi^\textrm{TOQ}$ satisfies the specifications, i.e., $J_{k}(s, \be) \leq \mathbb{P}^{\pi^\textrm{TOQ}}\big[[\omega_k,\ldots, \omega_N] \models \psi \big], \forall k\in \{0,\ldots, N\}$.
\end{theorem}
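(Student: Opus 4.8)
The plan is to reduce the statement to a comparison between the approximate and the exact constraint value functions. By Lemma~\ref{th:constrValFun} the exact function $J^*_k(s,\be)=\max_{\langle\alpha,\beta\rangle\in\Gamma^*_{s,k}}\beta^\top\be$ already equals $\mathbb{P}^{\pi^\textrm{TOQ}}[[\omega_k,\ldots,\omega_N]\models\psi]$, so it suffices to prove $J_k(s,\be)\le J^*_k(s,\be)$ for all $s\in\mathcal{S}$, $\be\in\Be$ and $k\in\{0,\ldots,N\}$. I would establish this by backward induction on $k$, writing $\mathcal{H}$ for the one-step dynamic programming backup of~\eqref{eq:optConstrVal}, i.e., $(\mathcal{H}[J])(s,\be)=\mathds{1}_\T(s)+\mathds{1}_{\mathcal{S}\setminus\T}(s)\max_{a\in\mathcal{A}}\mathbb{E}[J(s',\be')|s,\be,a]$. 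By Lemma~\ref{lemma:qual} and the derivation in the proof of Lemma~\ref{th:constrValFun}, the exact value functions satisfy $J^*_k=\mathcal{H}[J^*_{k+1}]$, and the action of $\mathcal{H}$ on a piecewise-affine function represented by a support set is realized by the cross-sum construction in~\eqref{eq:alphaBetaUpdate}.

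For the base case $k=N$, Algorithm~\ref{algo:valFunUpdate} initializes $\Gamma_{s,N}=\Gamma^*_{s,N}$, so $J_N(s,\cdot)=J^*_N(s,\cdot)$ and the bound holds with equality. For the inductive step I would assume $J_{k+1}(s,\be)\le J^*_{k+1}(s,\be)$ for all $s,\be$ and split the argument into two claims: (i) a \emph{subset} property, $J_k\le\mathcal{H}[J_{k+1}]$, reflecting that the point-based update only retains a subset of the vectors produced by the exact backup; and (ii) \emph{monotonicity} of $\mathcal{H}$, giving $\mathcal{H}[J_{k+1}]\le\mathcal{H}[J^*_{k+1}]=J^*_k$. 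Chaining the two yields $J_k\le J^*_k$.

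The core of the argument, and the step I expect to be the main obstacle, is claim (i). Let $\tilde\Gamma_{s,k}$ denote the set produced by applying the exact backup~\eqref{eq:alphaBetaUpdate} to the \emph{approximate} set $\Gamma_{s,k+1}$, so that $\max_{\langle\alpha,\beta\rangle\in\tilde\Gamma_{s,k}}\beta^\top\be=(\mathcal{H}[J_{k+1}])(s,\be)$. I would show that for every belief point $\be\in\mathcal{D}_b$ the pair $\langle\alpha^{a^*},\beta^{a^*}\rangle$ returned by Algorithm~\ref{algo:backup} is a genuine element of $\tilde\Gamma_{s,k}$. The key identity is that the per-$(s',z)$ active-vector selection in~\eqref{eq:updateVectors}, followed by the cross-sum carried out in lines~2--4, reproduces the maximizing element of the cross-sum $\oplus_{z,s'}\{\langle\alpha^{a,z,s',i},\beta^{a,z,s',i}\rangle\}$ at $\be$, since the maximum of a cross-sum at a fixed belief factorizes into a sum over $(z,s')$ of the individual maxima. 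Using the normalization identity $\sum_e F(s,e,a,s',e',z)\be(e)=P(z,s'|s,\be,a)\,\be'(e')$ with $\be'=F_v(s,\be,a,s',z)$, one checks that $\argmax_{\langle\alpha,\beta\rangle\in\Gamma_{s,k+1}}\beta^\top F_v(s,\be,a,s',z)$ selects precisely the index attaining $\max_\beta\beta^\top\be'$ for each $(s',z)$. Hence every pair added to $\Gamma_{s,k}$ lies in $\tilde\Gamma_{s,k}$, i.e., $\Gamma_{s,k}\subseteq\tilde\Gamma_{s,k}$, and taking the maximum over the smaller set yields $J_k(s,\be)=\max_{\langle\alpha,\beta\rangle\in\Gamma_{s,k}}\beta^\top\be\le\max_{\langle\alpha,\beta\rangle\in\tilde\Gamma_{s,k}}\beta^\top\be=(\mathcal{H}[J_{k+1}])(s,\be)$ for \emph{all} $\be$, not merely those in $\mathcal{D}_b$. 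Some care is needed to confirm that the lexicographic tie-break on the $\alpha$-vectors in~\eqref{eq:updateVectors} and in line~4 does not alter the value of $\be^\top\beta^{a^*}$, which is all the $\beta$-only bound requires.

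Claim (ii) should be routine: $\mathcal{H}$ is order preserving because the conditional expectation $\mathbb{E}[\,\cdot\,|s,\be,a]$ is a nonnegative-weighted sum, the maximum over $a\in\mathcal{A}$ preserves inequalities, and the added indicator terms do not depend on the argument. Thus the induction hypothesis $J_{k+1}\le J^*_{k+1}$ propagates through $\mathcal{H}$, and combining claims (i) and (ii) closes the induction. Invoking Lemma~\ref{th:constrValFun} then gives $J_k(s,\be)\le J^*_k(s,\be)=\mathbb{P}^{\pi^\textrm{TOQ}}[[\omega_k,\ldots,\omega_N]\models\psi]$, as desired.
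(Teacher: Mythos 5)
Your proof is correct, and its engine is the same as the paper's: the pairs retained by Algorithms~\ref{algo:backup}--\ref{algo:valFunUpdate} are genuine elements of the exact cross-sum backup, so maximizing $\beta^\top \be$ over the retained set can only underestimate, and Lemma~\ref{th:constrValFun} converts the resulting bound $J_k \leq J_k^*$ into the probability statement. The difference lies in how the induction is organized. The paper asserts (leaving the induction implicit) the stronger set-level statement $\Gamma_{s,k} \subseteq \Gamma_{s,k}^*$: since $\Gamma_{s,N} = \Gamma_{s,N}^*$ and each pair Algorithm~\ref{algo:backup} adds is a cross-sum of pairs drawn from $\Gamma_{s,k+1} \subseteq \Gamma_{s,k+1}^*$, it is literally one of the pairs enumerated in~\eqref{eq:alphaBetaUpdate}; pointwise domination $J_k \leq J_k^*$ then follows at every stage with no appeal to monotonicity. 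You instead run the induction at the level of value functions, comparing $\Gamma_{s,k}$ to $\tilde{\Gamma}_{s,k}$ (the exact backup of the \emph{approximate} set) and then invoking monotonicity of the operator $\mathcal{H}$ to pass from $\mathcal{H}[J_{k+1}]$ to $\mathcal{H}[J_{k+1}^*] = J_k^*$. This costs you an extra (admittedly routine) monotonicity lemma, but it is also somewhat more robust: your argument would survive even if the retained pairs were merely dominated by, rather than identical to, elements of the exact backup. One simplification you could make: the care you spend verifying that~\eqref{eq:updateVectors} selects the per-$(s',z)$ maximizer via the normalization identity is not needed for the bound---membership of the cross-summed pair in $\tilde{\Gamma}_{s,k}$ holds whatever the selection rule, and that is all claim~(i) requires; the maximizer property matters only for tightness of the approximation at the points of $\mathcal{D}_b$, not for the lower bound.
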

\vspace{0.2cm}
\begin{proof}
% Notice that at each backup update in Algorithm~\ref{algo:backup} the vector $\beta^{a^*}$ maximizes the product $\be^\top \beta^{a^*}$. Therefore, from standard point-based arguments and the piecewise affine structure of the constraint value function $J^*_{k}(s,\be)$ from \eqref{eq:optConstr} we have 
The $\beta$-vectors computed by the backup Algorithms~\ref{algo:backup}-\ref{algo:valFunUpdate} are a subset of the $\beta$-vectors from~\eqref{eq:alphaBetaUpdate}, which define the optimal value function from~\eqref{eq:optConstr}. Therefore, as $\Gamma_{s,k} \subseteq \Gamma_{s,k}^*$ we have that
\begin{equation*}
   J_{k}(s,\be) = \max_{\langle \alpha, \beta \rangle \in\Gamma_{s,k}} \beta^\top \be \leq \max_{\langle \alpha, \beta \rangle \in\Gamma_{s,k}^*} \beta^\top \be = J^*_{k}(s,\be),
\end{equation*}
$\forall k \in \{0, \ldots, N\}, \forall s\in\mathcal{S}$ and $\forall \be \in \Be$. The above equation and Lemma~\ref{th:constrValFun} imply that $J_{k}(s,\be) \leq \mathbb{P}^{\pi^{\textrm{TOQ}}}\big[ [\omega_k,\ldots, \omega_N] \models \psi \big], \forall k \in \{0, \ldots, N\}, s\in\mathcal{S}$ and $\be \in \Be$.
\end{proof}

\section{Examples}
% We tested the proposed strategy on navigation problems, where the state of the robot is perfectly observable and the state of the environment is partially observable. First, we test the proposed strategy in grid worlds and we compare the performance and computational time with the time-optimal policy from~\eqref{eq:minTime} and the quantitative policy from~\eqref{eq:maxSpecs}. Afterwards, we leverage the proposed strategy on a continuous time example, where the time-optimal quantitative policy~\eqref{eq:minTimeQua} is used to compute high-level commands which are sent to a low-level time controller.

\subsection{Grid Worlds}
The proposed strategy is tested on three grid worlds shown in Figures~\ref{fig:closedLoopMotivatinComparison} and \ref{fig:marsMission2}. We compared the proposed Time-Optimal Quantitative (TOQ) policy approximated using a one-step look ahead and the value function from Section~\ref{sec:pointBased} with the Quantitative (Q) and Time-Optimal (TO) policies from Problems~\eqref{eq:maxSpecs}-\eqref{eq:minTime}, which are approximated using standard point-based approaches for reward maximization\footnote{Code available online: \href{https://github.com/urosolia/MOMDP}{\texttt{https://github.com/urosolia/MOMDP}}. All simulations are run on a 2015 Macbook Pro with a 2.5GHz Quad-Core Intel Core i7 and 16GB of memory.}. In all simulations, the Segway receives a perfect measurement when adjacent to an uncertain region, a measurement which is correct with probability $0.8$ when one grid cell away in the diagonal direction and an uninformative measurement otherwise. The uncertain regions $\mathcal{R}_1$, $\mathcal{R}_2$, $\mathcal{R}_3$, and $\mathcal{R}_4$, may be traversable with probability $0.9$, $0.3$, $0.4$, and $0.5$. Finally, in order to analyze the effect of the number of uncertain regions on the computational complexity, we also tested a scenario where region $\mathcal{R}_4$ is a known obstacle.

Figures~\ref{fig:closedLoopMotivatinComparison} and \ref{fig:marsMission2} show the closed-loop trajectories for different realizations of the uncertain regions. We notice that the TOQ policy behaves similar to the TO one, when the constraint from Problem~\eqref{eq:minTimeQua} does not restrict the search space. Consider the 5x5 grid world in Figure~\ref{fig:marsMission2}, where the agent can explore all uncertain regions in different orders, as the task horizon is $T=30$. In this example, the TOQ policy first explores region $\mathcal{R}_1$, and then it steers the agent through region $\mathcal{R}_4$. This behavior minimizes the expected time to complete the task, as region $\mathcal{R}_1$ has the highest probability of being free. Thus, the closed-loop trajectories associated with the TO and TOQ policies overlap. On the other hand, in the 15x15 grid world from Figure~\ref{fig:marsMission2}, the task horizon is $T=40$ and the agent cannot explore all regions. Therefore, the TOQ policy maximizes the number of visited regions and behaves as the Q policy. Indeed, in this 15x15 grid world, first visiting region $\mathcal{R}_1$, which has the highest probability of being free, would lead to a lower probability of mission success.
In general, the TOQ policy minimizes the expected time to complete the task, without compromising the probability of satisfying the specifications, as we have seen in Figure~\ref{fig:closedLoopMotivatinComparison}. \\

\begin{table}[t!]
\centering
\setlength\tabcolsep{2.5pt}
\begin{tabular}{lccccc} \midrule
\makecell{Grid \\ World}  & \makecell{Exp. \\ Time}  & \makecell{Prob. \\ Failure} & \makecell{Failure \\ Bound}  & \makecell{ Total \\  Time [s] } & \makecell{ Backup \\  Time [ms] }  \\ \midrule
$[$5x5$]_3^\mathrm{TO}$         & 8.12                          & 4.2\%     & N/A              & 4.09        & 0.45  \\[1.5pt]
$[$5x5$]_3^\mathrm{Q}$          & 27.78                         & 4.2\%    & $\leq$ 4.2\%      & 3.49       & 0.39  \\[1.5pt]
$[$5x5$]_3^\mathrm{TOQ}$        & 8.12                          & 4.2\%    & $\leq$ 4.2\%      & 8.29       & 0.92  \\[1.5pt]
\midrule
$[$5x5$]_4^\mathrm{TO}$         & 8.2                           & 2.1\%     & N/A               &7.09        & 0.62  \\[1.5pt]
$[$5x5$]_4^\mathrm{Q}$          & 28.39                         & 2.1\%     & $\leq$ 2.1\%      &6.57        & 0.58  \\[1.5pt]
$[$5x5$]_4^\mathrm{TOQ}$        & 8.2                           & 2.1\%     & $\leq$ 2.1\%      &16.86     & 1.48 \\[1.5pt]
\midrule
$[$10x5$]_3^\mathrm{TO}$        & 4.23                          & 4.2\%     & N/A               &4.69        & 0.33  \\[1.5pt]
$[$10x5$]_3^\mathrm{Q}$         & 29.0                          & 0\%       & $\leq$ 0\%        &4.63        & 0.32  \\[1.5pt]
$[$10x5$]_3^\mathrm{TOQ}$       & 6.2                           & 0\%       & $\leq$ 0\%        &11.71       & 0.82  \\[1.5pt]
\midrule
$[$10x5$]_4^\mathrm{TO}$        & 4.53                          & 2.1\%     & N/A               &8.58        & 0.48  \\[1.5pt]
$[$10x5$]_4^\mathrm{Q}$         & 29.0                          & 0\%       & $\leq$ 0\%        &9.34        & 0.52  \\[1.5pt]
$[$10x5$]_4^\mathrm{TOQ}$       & 6.2                           & 0\%       & $\leq$ 0\%        &24.42        & 1.37 \\[1.5pt]
\midrule
$[$15x15$]_3^\mathrm{TO}$       & 25.2                          & 10\%      & N/A               &36.15       & 0.33  \\[1.5pt]
$[$15x15$]_3^\mathrm{Q}$        & 36.66                         & 6\%       & $\leq$ 6\%        &36.21       & 0.34  \\[1.5pt]
$[$15x15$]_3^\mathrm{TOQ}$      & 31.72                         & 6\%       & $\leq$ 6\%        &86.32       & 0.81  \\[1.5pt]
\midrule
$[$15x15$]_4^\mathrm{TO}$       & 25.2                          & 10\%      & N/A               &75.66       & 0.58  \\[1.5pt]
$[$15x15$]_4^\mathrm{Q}$        & 37.83                         & 3\%       & $\leq$ 8.9\%      &73.97       & 0.56  \\[1.5pt]
$[$15x15$]_4^\mathrm{TOQ}$      & 29.86                         & 3\%       & $\leq$ 3.5\%      &186.52      & 1.43 \\[1.5pt]
\midrule
\end{tabular}\caption{Comparison between the TOQ, Q and TO policies computed approximating Problems~\eqref{eq:minTimeQua}-\eqref{eq:minTime}, respectively. In the table, the grid world $[$XxY$]_{i}^j$ is defined by XxY grid cells,  $i$ uncertain regions and the control policy $j\in\{\mathrm{TO}, \mathrm{Q}, \mathrm{TOQ}\}$.}\label{table:comparison}
\end{table}

% For all policies, we running $2^{n_u}$ roll-outs for all possible configurations of the $n_u$ uncertain regions, given that at test time a correct measurement is received by the agent. The realized cost and number of failures from these roll-outs are then weighted by the initial belief $\be$ to approximate the expected cost and probability of failure. 

\vspace{-0.5cm}
Table~\ref{table:comparison} shows the expected time to complete the task, the probability of failure, the upper-bound of the probability of failure\footnote{For the TOQ policy the probability of failure $(1 - \mathbb{P}^\pi[\boldsymbol{\omega} \models \phi] ) \leq 1 - J_0(s,\be)$ where $J_0(s,\be)$ is defined in Lemma~\ref{th:appValFun}.}, the total time to approximate the value function and the backup time required to approximate the value function at a belief point $\be \in \mathcal{D}_b$.
The TO and Q policies are approximated using a standard point-based backup update and the TOQ policy is computed using the $\texttt{backup}$ function from Algorithm~\ref{algo:backup}.
The TO policy from Problem~\eqref{eq:minTime} minimizes the expected time to complete the task but, as a result, it incurs in the highest probability of failure. On the other hand, the proposed TOQ policy has a probability of failure equal to the Q policy, which is computed maximizing the probability of satisfying the specifications.  
Therefore, the proposed strategy is able to minimize the expected time to complete the task, without compromising the probability of mission success. 
Notice that as a trade-off the computational burden of synthesizing the proposed TOQ policy is higher compared to the one needed to synthesize the Q and TO policies. 
This result is expected as we are approximating the solution to Problem~\eqref{eq:minTimeQua} using a pair of vectors; whereas, the point-based strategy used to approximate Problems~\eqref{eq:maxSpecs}-\eqref{eq:minTime} maintains a single support vector per belief point. 
Finally, we underline that the backup time shown in Table~\ref{table:comparison} is associated with the computation of the support vectors at a discrete belief point $\be\in\mathcal{D}_b$. Thus, it mostly depends on the dimension of the belief space, which grows exponentially with the number of uncertain regions. 
Clearly, the total time needed to synthesize the control policy depends also on the grid size and number of
belief points, as the $\texttt{backup}$ update from Algorithm~\ref{algo:valFunUpdate} is used repeatedly to approximate the value function. Indeed, when parallel computing is not available, the total computational cost scales linearly with the number of observable states $|\mathcal{S}|$ and discrete belief points~$|\mathcal{D}_b|$.

\subsection{Navigation Task}
In this section, we use the proposed time-optimal quantitative policy~\eqref{eq:minTimeQua} as high-level decision maker for the navigation problem shown in Figure~\ref{fig:simEnv}, where a Segway has to explore a partially known environment to locate science samples that may be located in the goal regions $\mathcal{G}_i$. The specification $\psi = \neg \texttt{collision} U (( \texttt{Goal}_1 \land \texttt{sample}_1) \lor ( \texttt{Goal}_2 \land \texttt{sample}_2) )$, where the atomic proposition $\texttt{sample}_i$ is \texttt{true} if the region $\mathcal{G}_i$ contains a science sample and the atomic proposition $\texttt{Goal}_i$ is \texttt{true} if the Segway is in a goal cell $\mathcal{G}_i$. We implemented a hierarchical controller, where the the proposed time-optimal quantitative policy~\eqref{eq:minTimeQua} computes high-level commands and a model predictive controller~\cite{borrelli2017predictive} is used to compute low-level inputs.
The high-level commands are move North, South, East and West and they are used to compute the cell where the Segway should move next. Then,
the low-level control problem is solved as a standard regulation problem~\cite{borrelli2017predictive}, where the goal is to steer the Segway to the center of the goal cell. When a transition from cell $i$ to cell $j$ occurs, we update the belief about the environment and the observable state of the MOMDP, which represents the cell where the Segway is located.
% The goal of the navigation problem is to find a science sample which may be located in one of the two goal regions $\mathcal{G}_1$ and $\mathcal{G}_2$, i.e., $\psi = \neg \texttt{collision} U \texttt{sample}$, where the atomic proposition \texttt{sample} is \texttt{true} if the robot is in a goal cell $\mathcal{G}_i$ which contains a science sample. While searching for the science sample, the Segway has to explore the uncertain regions $\mathcal{R}_1$, $\mathcal{R}_2$ and $\mathcal{R}_3$,  which may be traversable with some probability. 
The accuracy of the environment observations decays exponentially as a function of the distance between the Segway and the measured region. In particular, for the binary variable $r^{(i)}\in\{0,1\}$, which represents the traversability of the region $\mathcal{R}_i$, we  receive a measurement $z_r^{(i)}$ which is accurate with the following probability:
\begin{equation*}
\begin{aligned}
P(z_r^{(i)}&=1|r^{(i)}=1, s)\\&=\begin{cases}
    1 & \mbox{if }d(s, \mathcal{R}_i)\leq 1, \\
    0.5+0.3e^{-(d(s, \mathcal{R}_i)-2)/2.5} & \mbox{otherwise}, \\
\end{cases}
\end{aligned}
\end{equation*}
where $d(s, \mathcal{R}_i)$ represents the Manhattan distance between the Segway and region $\mathcal{R}_i$. Similarly, we define the binary variable $g^{(i)}\in\{0,1\}$, which equals to one when region $\mathcal{G}_i$ contains a science sample and zero otherwise, and we receive an observation $z_g^{(i)}$  which has the following accuracy:
\begin{equation*}
\begin{aligned}
P(z_g^{(i)}=1|&g^{(i)}=1, s)\\&=\begin{cases}
    1 & \mbox{if }d(s, \mathcal{R}_i)=0, \\
    0.5+0.25e^{-d(s, \mathcal{R}_i)/1.5} & \mbox{otherwise}. \\
\end{cases}
\end{aligned}
\end{equation*}
% We constructed a finite state automaton with state $\mathcal{Q} = \{ \texttt{free state}, \texttt{collision state}, \texttt{terminal state} \}$ that is initialized at the \texttt{free state}. A transition to a \texttt{collision state} occurs when the Segway is in an uncertain region that is not traversavle and a transition to the \texttt{terminal state} when the goal state that contains a science sample is reached. Afterwards, we constructed the product MOMDP and we removed that states which are not reachable from the starting state to reduce the computational~load.
\begin{figure}[t!]
    \centering
	\includegraphics[width= 1.0\columnwidth]{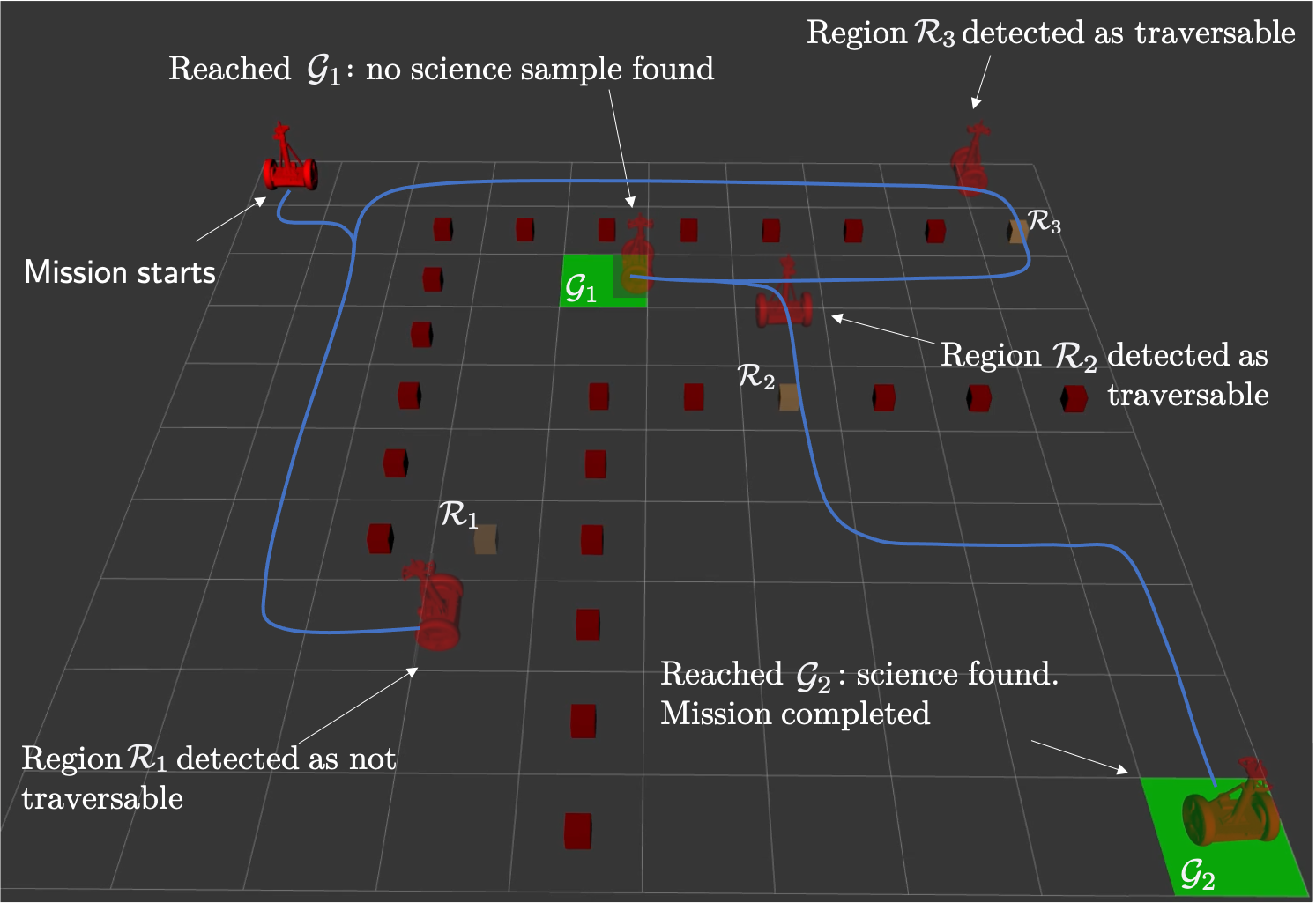}
    \caption{Evolution of the Segway (blue) in the high-fidelity simulator. The TOQ policy decides to first explore region $\mathcal{R}_1$, which in this example is not traversable. Afterwards, the controller explores regions $\mathcal{R}_2$, $\mathcal{G}_1$ and $\mathcal{R}_3$. Finally, the Segway reaches region $\mathcal{G}_2$, which in this example contains the science sample.}\label{fig:simEnv}
\end{figure}

\begin{figure}[t]
    \centering
	\includegraphics[width= 1.0\columnwidth]{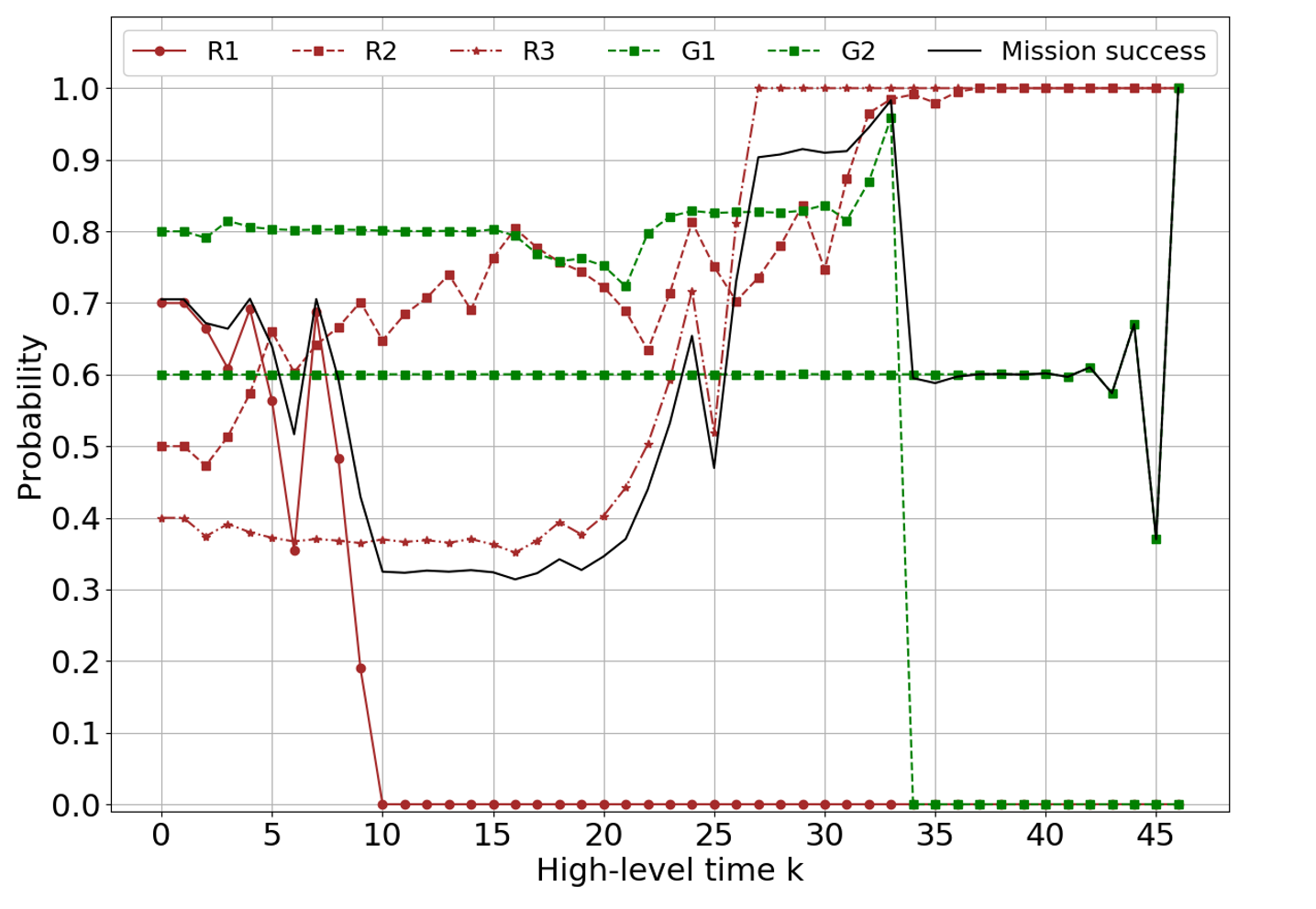}
    \caption{Probability of satisfying the specification. The figure shows also the evolution of the belief for the uncertain and goal regions.}\label{fig:probSpec}
\end{figure}

Figure~\ref{fig:simEnv} shows the closed-loop trajectory of the Segway. At the beginning of the simulation, the probability that regions $\mathcal{R}_1$, $\mathcal{R}_2$, and $\mathcal{R}_3$, may be traversable is $0.7$, $0.5$, and $0.4$. Furthermore, the probability that regions $\mathcal{G}_1$ and $\mathcal{G}_2$ contain the science sample is $0.8$ and $0.6$, respectively. The controller first explores region $\mathcal{R}_1$, which has the highest probability of being traversable. However, in this example region $\mathcal{R}_1$ is not traversable and therefore the Segway steers to region $\mathcal{R}_3$. 
As shown in Figure~\ref{fig:probSpec}, the environment observations are used to update environment beliefs and the probability of mission success, which represents the probability of satisfying the mission specifications. Notice that, when the Segway detects that region $\mathcal{G}_1$ does not contain a science sample at the high-level time $k=34$, the probability of mission success drops, as shown in Figure~\ref{fig:probSpec}. Afterwards, the controller explores region $\mathcal{R}_3$ and steers the Segway to region $\mathcal{G}_2$, which contains a science sample. Finally, we notice that for all high-level time steps $k \geq 37$ the controller is uncertain only about the state of region $\mathcal{G}_2$, therefore the probability of mission success overlaps with the probability that region $\mathcal{G}_2$ contains a science sample.

\section{Conclusions}
In this work, we studied time-optimal quantitative problems for MOMDPs. First, we presented a dynamic programming update to compute the value function of time-optimal quantitative problems. Afterwards, we leveraged the piecewise-affine nature of the optimal value function to define a point-based approximation strategy, which allows us to compute a lower bound of the probability of satisfying the specifications.
Finally, we compared the proposed strategy with time-optimal and quantitative policies.
% in three grid worlds. We showed that the proposed strategy, which is tailored to problem with mixed observability, can be used for navigation tasks where the state of the environment is partially observable.

% Future work will investigate strategies to further decompose the belief space to gain computationally tractability when environment states are observable in only specific parts of the environment and risk-sensitive decision making in partially observable environments~\cite{ahmadi2020risk}.

\renewcommand{\baselinestretch}{0.93}
\bibliographystyle{IEEEtran}
\bibliography{refs}

\end{document}